\newcommand{\be}{\begin{equation}}
\newcommand{\beast}{\begin{equation*}}
\newcommand{\ee}{\end{equation}}
\newcommand{\eeast}{\end{equation*}}
\newcommand{\br}{\begin{eqnarray}}
\newcommand{\brast}{\begin{eqnarray*}}
\newcommand{\er}{\end{eqnarray}}
\newcommand{\erast}{\end{eqnarray*}}
\newcommand{\bse}{\begin{subequations}}
\newcommand{\ese}{\end{subequations}}
\newcommand{\bd}{\begin{displaymath}}
\newcommand{\ed}{\end{displaymath}}
\newcommand{\bfig}{\begin{figure}}
\newcommand{\efig}{\end{figure}}
\newcommand{\aspe}{\textquotedblleft}
\newcommand{\aspd}{\textquotedblright}
\newcommand{\modu}{\, \mathrm{mod}\,}
\newtheorem{thm}{Theorem}
\newtheorem{lemma}{Lemma}
\newtheorem{expl}{Example}
\newtheorem{algm}{Algorithm}
\begin{document}

\title{An algorithm for hiding and recovering data using matrices}
\author{$^1$Salomon S. Mizrahi} \author{$^2$Di\'{o}genes Galetti}
\address{$^1$Departamento de F\'{i}sica, CCET, Universidade Federal de São Carlos, 
São Carlos, SP, Brasil} \email{salomon@df.ufscar.br} 
\address{$^2$Instituto de F\'{\i}sica Te\'{o}rica, Universidade Estadual Paulista (UNESP), 
S\~{a}o Paulo, SP, Brasil } \email{galetti@ift.unesp.br}
\date{\today}

\begin{abstract}
We present an algorithm for the recovery of a matrix $\mathbb{M}$ %
(non-singular $\in $ $\mathbb{C}^{N\times N}$) by only being aware of two 
of its powers, $\mathbb{M}_{k_{1}}:=\mathbb{M}^{k_{1}}$ and $\mathbb{M}%
_{k_{2}}:=\mathbb{M}^{k_{2}}$ ($k_{1}>k_{2}$) whose exponents are
positive coprime numbers. The knowledge of the exponents is the key to retrieve
matrix $\mathbb{M}$ out from the two matrices $\mathbb{M}_{k_{i}}$. The 
procedure combines products and inversions of matrices, and a few computational 
steps are needed to get $\mathbb{M}$, almost independently of the exponents 
magnitudes. Guessing the matrix $\mathbb{M}$ from the two matrices $\mathbb{M}_{k_{i}}$,
without the knowledge of $k_{1}$ and $k_{2}$, is comparatively 
highly consuming in terms of number of operations. If a private message, 
contained in $\mathbb{M}$, has to be conveyed, the exponents can be encrypted 
and then distributed through a public key method as, for instance, the DF 
(Diffie-Hellman), the RSA (Rivest-Shamir-Adleman), or any other.  
\end{abstract}
\pacs{02.10.Yn, 07.05.Kf, 07.05.Rm}
\keywords{matrices, data, encryption, algorithm}
%%%%%%%%%%%%%%%%%%%%%%%%%%%%%%%%%%%%%%%%%%%%%%%%%%%%%%%%%%%%%%%%%%
\maketitle
%%%%%%%%%%%%%%%%%%%%%%%%%%%%%%%%%%%%%%%%%%%%%
%
%%%%%%%%%%%%%%%%%%%%%%%%%%%%%%%%%%%%%%%%%%%%%%%%%%%%%%%%%%%%%%%
\section{Introduction}
%%%%%%%%%%%%%%%%%%%%%%%%%%%%%%%%%%%%%%%%%%%%%%%5
%
The sender of a message $M$ can choose to convey it to a receiver as a single continuous 
string of numbers, in a chosen base as, binary, decimal, hexadecimal, etc... or he 
also can choose to use the matrix architecture, where each entry conceals part of the message. 
He can also split a string, or a matrix, in several parts to be sent separately. 
Besides being easy to handle, the structure and properties of matrices allow to 
accommodate not only one but several messages if one considers that each entry, 
or each row (or column), corresponds to an instruction or a sentence. Now,If the 
message must be private, or confidential, then, as it was already practiced in the 
antiquity (for instance, by Julius Ceasar) the message must be encrypted and hopping 
that only the receiver could get access and knows the procedure to decrypt it.  
To our knowledge the use of matrices to encode messages was 
initiated by L. S. Hill \cite{hill}. The Hill cipher is a polygraphic substitution 
that makes use of matrix multiplications in order to change a plaintext letters into 
a ciphertext. One of the basic components of classical ciphers is the substitution 
cipher: a ciphertext matrix $\mathbb{C}$ is obtained by multiplication of a
plaintext matrix $\mathbb{P}$ by a key matrix $\mathbb{K}$, $\mathbb{C=KP}%
\modu \left( q\right)$, i.e., by a linear transformation to be followed
by an arithmetic modular operation, $q$ being the number of digits. So a square 
matrix $\mathbb{C}$, of order $N$ can host, for instance, $2N^{2}$ messages to 
be conveyed by the sender $A$ to the receiver $B$. More recently, a report  
about the usefulness of matrices in public-key cryptography was published 
\cite{ayan1}.

When a sender $A$ (aka Alice, in the jargon) wants to remit a message
contained in a matrix $\mathbb{A}$ to a receiver $B$ (aka Bob), such that it could
not be known by any third person $E$ (aka the eavesdropper Eve, that is also a 
cryptanalyst), it is necessary to encode $\mathbb{A}\Longrightarrow \mathbb{M}$ 
in some special form such to be hard to be decoded by Eve, conceding that she 
has access to $\mathbb{M}$. That this matrix is difficult to decrypt means
that even if Eve succeeds to decode $\mathbb{M}$ and gets the
message, the time she consumed for the task is long enough so that the acquired
data becomes already obsolete. The standard procedure to encrypt $%
\mathbb{A}$ into $\mathbb{M}$ consists in producing of a secret key
owned only by Bob that he uses to get $\mathbb{M}$ in a very short time
compared to the one Eve needs to arrive at the same result. 

If higher confidentiality of the message is needed, then a 
second layer of encryption can be implemented. For instance, a matrix $\mathbb{M%
}$, already subjected to a first encryption, is subjected to a second one,
becoming matrix $\mathbb{Z}$. An interesting procedure could be the
calculation of $\mathbb{Z :=M}^{k}$ the $k$-th ($k$ an integer) power of a
non-singular square matrix defined in the field of complex numbers $\mathbb{M%
}\in C^{N\times N}$. The inverse process to extract the unkown $\mathbb{M}$
from a know $\mathbb{Z}$ consists in seeking a solution $\mathbb{X}$ (or
more than one) of the equation $\mathbb{X}=\mathbb{Z}^{1/k}$ and expect that 
$\mathbb{X=M}$. The direct procedure consists in diagonalizing $\mathbb{Z}$
and to each eigenvalue $z_{i}$ solve the equation $x^{k}-z_{i}=0$ and the zeros $%
x_{i,l},$ $l=1,...,k$ will result; any zero can be one eigenvalue of
matrix $\mathbb{M}$. Here lies the ambiguity! An eavesdropper may have
access to the matrix $\mathbb{Z}$ but Alice and Bob have to keep $k$ as they
secret key. Several algorithms are proposed in Ref. \cite{higham1} that
make use of iterative approaches, which however lead to approximate solutions.

Nevertheless, one could envisage another method still based on powers of a
matrix $\mathbb{M}$, making the unraveling of $\mathbb{Z}$ less time
consuming than to construct it, since $k-1$ matrix multiplications are
necessary to construct $\mathbb{M}^{k}$, without storage of intermediary results. 
We propose an algorithm to reconstruct a matrix $\mathbb{M}\in C^{N\times N}$, 
to be conveyed from Alice to Bob, having arbitrary eigenvalue multiplicity 
(degeneracy). The only requirement is that $\mathbb{M}$ be non-singular
\footnote{Even being singular, the singularity may be removed by the 
addition spurious row and column.}. The procedure goes as follows: Alice 
compute two powers of $\mathbb{M}$, with 
exponents that are two positive coprime (relatively prime) numbers, $k_{1}$ 
and $k_{2}$, such that $\mathbb{M}_{k_{1}}:=\mathbb{M}^{k_{1}}$ and 
$\mathbb{M}_{k_{2}}:=\mathbb{M}^{k_{2}}$, and sends the matrices $\mathbb{M}_{k_{1}}$ 
and $ \mathbb{M}_{k_{2}}$ to Bob, publicly or privately. Bob's task consists in 
recovering $\mathbb{M}_{1}:=\mathbb{M}$ from those two matrices in a quite
short time (the minimal number of matrix operations), before it could 
be unraveled by a third party. The knowledge of the exponents, $k_{1}$ and $k_{2}$, 
is the key element to invert the process in order to get $\mathbb{M}_{1}$. 
The exponents can be either shared by Alice and Bob previously, or they can be 
encoded, creating an \emph{encyphering key}, by using, for instance, the 
Diffie-Hellman proposal \cite{diffie}, the RSA method \cite{RSA} or the 
quantum BB84 one \cite{BB84}, and then distributed by 
Bob (indeed, it is Bob that could determine the values $k_{1}$ and $k_{2}$ that Alice
should use), one key can be made public (a unique key for many several senders) 
or private (one key for each message sender) while the other one is Bob's own key 
(or keys) that he keeps secretly.

Regarding the computational time consumption, there are efficient algorithms
that reduce the necessary time to calculate the product of two square $N \times N$
matrices to $O\left( N^{2.376}\right)$ \cite{copper}, instead of the direct method 
that is $O\left( N^{3}\right) $. The inversion of an invertible matrix consumes the 
same amount of time  \cite{copper}. For Alice, the time consumed to calculate the powers
of a matrix $\mathbb{M}^{k}$ can be reduced from $\left( k-1\right) $ to 
approximately $2\sqrt{k}$ for $k\gg 1$, if storage of lower powers ($<k/2$)
matrices is possible. Regarding our algorithm the inverse operation,
extracting $\mathbb{M}$ from the matrices $\mathbb{M}_{k_{1}}$ and $\mathbb{M%
}_{k_{2}}$, turns out to be much less time consuming (to Bob) than to produce
it (by Alice). In the following sections we present the algorithm that can
be used to encrypt messages to be sent from Alice to Bob and still keeping a
degree of confidentiality against a skilful eavesdropper. To illustrate the
procedures of Alice and Bob we present a simple example.

\begin{expl}\label{ex1}
Alice wants to send a message to Bob that is encoded in the
non-singular matrix $\mathbb{M}\in $ $\mathbb{C}^{2\times 2}$. Alice 
computes two powers of $\mathbb{M}$,
choosing, for instance, the exponents $k_{1}=17$ and $k_{2}=11$ (or receives
them from Bob), and she constructs the matrices $\mathbb{C}:=\mathbb{M}_{17}$
and $\mathbb{D}:=\mathbb{M}_{11}$ ,
\begin{equation}
\mathbb{C}=\left( 
\begin{array}{cc}
-8229\,303\,833+i\,96\, 078\,379\,931 & 159\, 011\,972\,369 + i\,193\, 528\,618\,710  \\ 
221\, 214\,366\,306+ i\,117\, 483\,350\,975 &  
600\, 042\,299\,893- i\,104\, 462\,213\,832%
\end{array}%
\right) .  
\label{a1}
\end{equation}
and 
\begin{equation}
\mathbb{D}=\left( 
\begin{array}{cc}
-717\,135-i\, 7281\,379 & -17\,385\,865-i\,6190\,286 \\ 
-18\,429\,370+i\,972\,761 & -31\,527\,077+i\,28\,463\,112%
\end{array}%
\right)   
\label{a2}
\end{equation}
respectively, and sends them to Bob, through some channel, openly or
privately. Alice then produces an enciphering key, $K_{E}$, for the numbers $%
k_{1}$ and $k_{2}$ (or she does not need to send anything else if she
received $k_{1}\ $and $k_{2}$ from Bob) that she sends to Bob through
another channel. Bob uses the deciphering key $K_{D}$, that only he knows,
to retrieve $k_{1}$ and $k_{2}$, and he then makes use of the algorithm (to
be explained below), that consists in doing the matrix operations
$\mathbb{C}^{2}\mathbb{D}^{-3}$ that results in  
\begin{equation}
\mathbb{M}=\left( 
\begin{array}{cc}
1+4i & 3-2i \\ 
2-3i & -1-5i%
\end{array}%
\right)\ .
\label{a3}
\end{equation}
\end{expl}
The algorithm dispenses the need of calculating eigenvectors, eigenvalues or
to perform any decomposition of $\mathbb{C}$ and $\mathbb{D}$ and do not
depend on the possible eigenvalue multiplicities of matrices (\ref{a1}) and (%
\ref{a2}). Once the matrices $\mathbb{C}$ and $\mathbb{D}$ are stored, the
number of matrix operations Bob needs to perform are $4$: one matrix
inversion $\left( \mathbb{D}^{-1}\right) $ (stored), one product $\mathbb{CD}%
^{-1}$ (stored), two products $\left( \mathbb{CD}^{-1}\right) \left( \mathbb{%
CD}^{-1}\right) \left( \mathbb{D}^{-1}\right) $, whereas in order to
construct $\mathbb{C}$ and $\mathbb{D}$ Alice has to calculate $16$ matrix
multiplications without storage of powers of $\mathbb{M}$ or $7$
multiplications considering storage, as to be explained in details below. For 
a different pair of numbers, $k_{1}=38$ and $k_{2}=23$, for instance, as exponents, 
the sequence of operations needed to retrieve $\mathbb{M}$ is a different 
one, being $\mathbb{C}^{-3}\mathbb{D}^{5}$.

The pair of exponents (-3,5) is unique for each pair of coprime integers 
$k_{1}\ $and $k_{2}$, as shown in Theorem \ref{theor3} in Appendix A. As 
observed by D. Knuth \cite{knuth1}, working numerically with integers or 
with fractions, as entries of 
matrix $\mathbb{M}$, instead of using floating point numbers, avoids 
accumulated rounding errors, so the accuracy of the calculations is absolute. 
In the case where an error occurs during the transmission of the key conveyed 
by Alice to Bob (or vice versa), for instance instead of receiving $%
k_{2}=11$ Bob receives the number $12$, then, according to the proposed 
Algorithm \ref{alg1} below, he will calculate a different sequence of
operations involving matrices $\mathbb{C}$ and $\mathbb{D}$ , namely $%
\mathbb{C}^{5} \mathbb{D}^{-7}$, and he will get a fully different matrix,
\begin{equation}
\mathbb{M}^{\prime }=\left( 
\begin{array}{cc}
5\,319 + i\, 84\, 141 & -1\, 142+ i\, 119\,291 \\ 
44\,827+ i\, 110\, 554 & 288\, 728+ i\, 196\, 979%
\end{array}%
\right) ,  \label{a4}
\end{equation}
instead of the sought matrix (\ref{a3}). Therefore, the sequence of operations 
to get the matrix $\mathbb{M}$, from matrices $\mathbb{C}$ and $\mathbb{D}$, is
unique for each pair of coprime numbers $k_{1}$ and $k_{2}$. If an error
occurs by changing $k_{1}$ or $k_{2}$, even by one unit, the sequence of
operations changes, resulting in a wrong matrix as in Eq. (\ref{a4}). 
%
%%%%%%%%%%%%%%%%%%%%%%%%%%%%%%%%%%%%%%%%%
\section{Algorithm and examples}
%%%%%%%%%%%%%%%%%%%%%%%%%%%%%%%%%%%%%%%%%%%%
%
We consider a variant of Euclid's algorithm, see Lemma \ref{lema1} 
(in Appendix A), namely, instead of looking for the $\gcd$ (greatest common 
divisor) of a pair of arbitrary coprime integers -- gcd$(k_{1}$, $k_{2})= 1$ -- 
we are essentially interested in determining the sequence of
quotients that are inherent to the algorithm. For $k_{1}$ and $k_{2}$ 
positive coprime numbers we write down the sequence of modular calculations
of the pairs $\left( k_{r},k_{r-1}\right) $, $k_{1}>k_{2}>\cdots >k_{r-1}>k_{r}>k_{r+1}=1$, 
\begin{eqnarray}
k_{1}\modu\ k_{2} &=& k_{3}\quad \mathrm{\Longrightarrow \quad }%
k_{1}=q_{2}k_{2}+k_{3},  \notag \\
k_{2}\modu\ k_{3} &=&k_{4}\quad \mathrm{\Longrightarrow \quad }%
k_{2}=q_{3}k_{3}+k_{4},  \notag \\
k_{3}\modu\ k_{4} &=&k_{5}\quad \mathrm{\Longrightarrow \quad }%
k_{3}=q_{4}k_{4}+k_{5},  \notag \\
&&\vdots \quad \ \ \   \label{B4} \\
k_{r-2}\modu\ k_{r-1} &=&k_{r}\quad \ \mathrm{\Longrightarrow \quad }%
k_{r-2}=q_{r-1}k_{r-1}+k_{r},  \notag \\
k_{r-1}\modu\ k_{r} &=&k_{r+1}\quad \ \ \mathrm{\Longrightarrow \quad }%
k_{r-1}=q_{r}k_{r}+k_{r+1},  \notag
\end{eqnarray}
where $q_{2},q_{3},...,q_{r}$ are the quotients and $k_{3},k_{4},...k_{r+1}$
the remainders. 

We show next that the determination of a non-singular
matrix, $\mathbb{M\in C}^{N\times N}$, admitted to be unknown, is possible
if: (a) two of its powers, $\mathbb{M}_{k_{1}}:= \mathbb{M} ^{k_{1}}$ and 
$\mathbb{M}_{k_{2}}:= \mathbb{M} ^{k_{2}}$, are known, and (b) the exponents 
$k_{1}$ and $k_{2}$, positive coprime numbers, are also known.
\begin{algm} \label{alg1}
Since $k_{1}=q_{2}k_{2}+k_{3}$ we write the decomposition $%
\mathbb{M}_{k_{3}}=$ $\mathbb{M}_{k_{1}}\left( \left( \mathbb{M}%
_{k_{2}}\right) ^{q_{2}}\right) ^{-1}$ to get the matrix $\mathbb{M}_{k_{3}}$%
;$\,$ again, the numbers ($k_{2}$, $k_{3}$) are also coprime which permits
us to write the decomposition $\mathbb{M}_{k_{4}}=$ $\mathbb{M}%
_{k_{2}}\left( \left( \mathbb{M}_{k_{3}}\right) ^{q_{3}}\right) ^{-1}=%
\mathbb{M}_{k_{2}}\left( \left( \mathbb{M}_{k_{3}}\right) ^{-1}\right)
^{q_{3}}$ to get the matrix $\mathbb{M}_{k_{4}}$. Continuing this procedure
by reducing the powers of matrix $\mathbb{M}_{k_{1}}$, one arrives at the
original matrix $\mathbb{M}_{1}$ (the seed), $\mathbb{M}_{1}=$ $\mathbb{M}%
_{k_{r-1}}\left( \left( \mathbb{M}_{k_{r}}\right) ^{q_{r}}\right) ^{-1}=%
\mathbb{M}_{k_{r-1}}\left( \left( \mathbb{M}_{k_{r}}\right) ^{-1}\right)
^{q_{r}}$. All the $r-1$ pairs $\left( k_{1},k_{2}\right) $, $\left(
k_{2},k_{3}\right) $, ..., $\left( k_{r-1},k_{r}\right) $ are coprime numbers. The
choice by Alice (or by Bob) of $\left( k_{1},k_{2}\right) $ being coprime is
an important feature in order that the last equation of the sequence (\ref%
{B4}) be $k_{r-1}\modu k_{r}=1$; the value $k_{r+1}=1$ is necessary
to retrieve the seed matrix $\mathbb{M}$. The set of quotients $\left(
 q_{2},q_{3},...,q_{r} \right) $ constitutes the essential element to
construct the sequence of operations: 
\begin{eqnarray}
\mathbb{M}_{k_{3}} &=&\mathbb{M}_{k_{1}}\left( \left( \mathbb{M}%
_{k_{2}}\right) ^{-1}\right) ^{q_{2}},  \notag \\
\mathbb{M}_{k_{4}} &=&\mathbb{M}_{k_{2}}\left( \left( \mathbb{M}%
_{k_{3}}\right) ^{-1}\right) ^{q_{3}},  \notag \\
&&\vdots   \label{B5} \\
\mathbb{M}_{k_{r+1}} &=&\mathbb{M}_{1}=\mathbb{M}_{k_{r-1}}\left( \left( 
\mathbb{M}_{k_{r}}\right) ^{-1}\right) ^{q_{r}}.  \notag
\end{eqnarray}
\end{algm}

From another point of view, as to be shown below, any sequence of integers $%
\left( q_{2},q_{3},...,q_{r},q_{r+1}=k_{r}\right) $ expressed as a \emph{%
continued fraction} leads to the ratio $k_{1}/k_{2}$ where $k_{1}$ and $%
k_{2} $ are coprime. Because of the one-to-one correspondence, Alice (or Bob) can
choose either a pair $\left( k_{1}, k_{2} \right)$ or a sequence of quotients 
$\left(  q_{2},q_{3},...,q_{r} \right) $ to
construct the matrices $\mathbb{C}$ and $\mathbb{D}$. However, for not
facilitating the task of decryption by Eve, it is more convenient to choose
judiciously the sequence of integers that will produce the pair of
exponents. The number of operations in the form of products of matrices
(without storage) and inversions is 
\begin{equation}
\mathcal{N}_{op,r}=\sum_{i=2}^{r}\left( q_{i}+1\right) .  \label{b6}
\end{equation}
For any pair of coprime numbers $(k_1,k_2)$ the uniqueness of the 
sequence of quotients is proved in Theorem \ref{theor3}. Below we illustrate 
the use of the Algorithm \ref{alg1} by two examples
\begin{expl}\label{ex2}
For the prime numbers $k_{1}=1019$ and $k_{2}=239$, the sequence
of the modular calculation goes as shown in Table \ref{b7}, 
\begin{center}
\begin{table}[H]
\centering
\begin{equation}
\begin{tabular}{|c||c|l|l|c|}
\hline
$i+1$ & $k_{i+1}$ & $k_{i}\modu \ k_{i+1}= k_{i+2}$ & $%
q_{i+1}k_{i+1}+k_{i+2}=k_{i}$ & $q_{i+1}$ \\ \hline\hline
$2$ & $239$ & $1019\modu \ 239=63$ & $4\times 239+63=1019$ & $4$ \\ 
\hline
$3$ & $63$ & $239\modu \ 63=50$ & $3\times 63+50=239$ & $3$ \\ \hline
$4$ & $50$ & $63\modu \ 50=13$ & $1\times 50+13=63$ & $1$ \\ \hline
$5$ & $13$ & $50\modu \ 13=11$ & $3\times 13+11=50$ & $3$ \\ \hline
$6$ & $11$ & $13\modu \ 11=2$ & $1\times 11+2=13$ & $1$ \\ \hline
$7$ & $2$ & $11\modu \ 2=1$ & $5\times 2+1=11$ & $5$ \\ \hline
$8$ & $1$ & $2\modu \ 1=0$ & $2\times 1+0=2$ & $2$ \\ \hline
\end{tabular}
\end{equation}%
\caption{{\small {Sequence of the modular calculation with
remainders and quotients}}}
\label{b7}
\end{table}
\end{center}
where $\left( q_{2},q_{3},...,q_{8}\right) =\left( 4,3,1,3,1,5,2\right) $, $%
r=7$, and all the pairs $\left( 1019,239\right) $, $\left(
239,63\right) $, $\left( 63,50\right) $, $\left( 50,13\right) $, $\left(
13,11\right) $, $\left( 11,2\right) $, $\left( 2,1\right) $ are coprime. 
From these results the six sequential operations to be done by Bob are 
\begin{subequations}
\label{b9}
\begin{eqnarray}
\mathbb{M}_{63} &=&\mathbb{M}_{1019}\left( \left( \mathbb{M}_{239}\right)
^{-1}\right) ^{4},(5)  \label{b9a} \\
\mathbb{M}_{50} &=&\mathbb{M}_{239}\left( \left( \mathbb{M}_{63}\right)
^{-1}\right) ^{3},\left( 4\right)   \label{b9b} \\
\mathbb{M}_{13} &=&\mathbb{M}_{63}\left( \mathbb{M}_{50}\right) ^{-1},\text{ 
}\left( 2\right)   \label{b9c} \\
\mathbb{M}_{11} &=&\mathbb{M}_{50}\left( \left( \mathbb{M}_{13}\right)
^{-1}\right) ^{3},\left( 4\right)   \label{b9d} \\
\mathbb{M}_{2} &=&\mathbb{M}_{13}\left( \mathbb{M}_{11}\right) ^{-1},\left(
2\right)   \label{b9f} \\
\mathbb{M}_{1} &=&\mathbb{M}_{11}\left( \left( \mathbb{M}_{2}\right)
^{-1}\right) ^{5},\left( 6\right) .  \label{b9g}
\end{eqnarray}
\end{subequations}
Each step is calculated as $\mathbb{M}_{k_{i+2}}=\mathbb{M}_{k_{i}}\left(
\left( \mathbb{M}_{k_{i+1}}\right) ^{-1}\right) ^{q_{i+1}}$, and in each row, 
in parenthesis, we give the number of operations (matrix multiplications
plus inversions) which is $\left( q_{i}+1\right) $. From Table \ref{b7} we observe 
that the total number of operations calculated from Eq. (\ref{b6}) is $%
\mathcal{N}_{op,7}=23$. Calling $\mathbb{M}_{1019}=\mathbb{C}$ and $\mathbb{M%
}_{239}=\mathbb{D}$, we then rewrite each row in the set of Eqs. (\ref{b9}) as 
\begin{subequations}
\label{b10}
\begin{eqnarray}
\mathbb{M}_{63} &=&
%\mathbb{M}_{1019}\left( \left( \mathbb{M}_{239}\right) ^{-1}\right) ^{4}
\mathbb{C}\left( \mathbb{D}^{-1}\right) ^{4},\quad 
\label{b10a} \\
\mathbb{M}_{50} &=&
%\mathbb{M}_{239}\left( \left( \mathbb{M}_{63}\right) ^{-1}\right) ^{3}=
\left( \mathbb{C}^{-1}\right) ^{3}\mathbb{D}^{13},\quad 
\label{b10b} \\
\mathbb{M}_{13} &=&
%\mathbb{M}_{63}\left( \mathbb{M}_{50}\right) ^{-1}=%
\mathbb{C}^{4}\left( \mathbb{D}^{-1}\right) ^{17},\quad   \label{b10c} \\
\mathbb{M}_{11} &=&
%\mathbb{M}_{50}\left( \left( \mathbb{M}_{13}\right)^{-1}\right) ^{3}=
\left( \mathbb{C}^{-1}\right) ^{15}\mathbb{D}^{64},\quad 
\label{b10d} \\
\mathbb{M}_{2} &=&
%\mathbb{M}_{13}\left( \mathbb{M}_{11}\right) ^{-1}=
\mathbb{C}^{19}\left( \mathbb{D}^{-1}\right) ^{81},\quad   \label{b10f} \\
\mathbb{M}_{1} &=&
%\mathbb{M}_{11}\left( \left( \mathbb{M}_{2}\right) ^{-1}\right) ^{5}=
\left( \mathbb{C}^{-1}\right) ^{110}\mathbb{D}^{469},\quad 
\label{b10g}
\end{eqnarray}
\end{subequations}
and the necessary number of operations with storage to be performed in the RHS 
of Eq. (\ref{b10g}) is: $6$ 
matrix inversions and $15$ multiplications ($3+3+1+3+1+4=15$). So the total 
number of matrix operations reduces to $21$. 

We anticipate the result of Theorem \ref{teor3} that says: in the RHS of 
each equation in the set (\ref{b10}) the exponents in 
$\mathbb{C}^{p_{l}}\mathbb{D}^{t_{l}}$ are related through the equation
\begin{equation}
p_{l}k_{1}+t_{l}k_{2}=k_{l+2},\quad l=1,..,6\ ,  \label{b11}
\end{equation}
where $p_{l}$ and $t_{l}$ are two integers such that $p_{l}t_{l}<0$, whose
dependence on the quotients is shown in Table \ref{b12}, and the last line, 
$p_{6}k_{1}+t_{6}k_{2}=1$, is known as \emph{Bezout identity}.  
\begin{table}[hbtp]
\begin{tabular}{|l||c|c|c|}
\hline
$l$ & $p_{l}$ & $t_{l}$ & $p_{l}k_{1}+t_{l}k_{2}=k_{l+2}$ \\ \hline\hline
$1$ & \multicolumn{1}{r}{$1$} & \multicolumn{1}{|r|}{$-4$} & 
\multicolumn{1}{r|}{$1\times 1019-4\times 239=63$} \\ \hline
$2$ & \multicolumn{1}{r}{$-3$} & \multicolumn{1}{|r|}{$13$} & 
\multicolumn{1}{r|}{$-3\times 1019+13\times 239=50$} \\ \hline
$3$ & \multicolumn{1}{r}{$4$} & \multicolumn{1}{|r|}{$-17$} & 
\multicolumn{1}{r|}{$4\times 1019-17\times 239=13$} \\ \hline
$4$ & \multicolumn{1}{r}{$-15$} & \multicolumn{1}{|r|}{$64$} & 
\multicolumn{1}{r|}{$-15\times 1019+64\times 239=11$} \\ \hline
$5$ & \multicolumn{1}{r|}{$19$} & $-81$ & \multicolumn{1}{r|}{$19\times
1019-81\times 239=2$} \\ \hline
$6$ & \multicolumn{1}{r|}{$-110$} & $469$ & \multicolumn{1}{r|}{$%
-110\times 1019+469\times 239=1$} \\ \hline
\end{tabular}%
\caption{\small{Bezout relations and the coefficients.}}
\label{b12}
\end{table}
The moduli of the coefficient $\left\vert p_{l}\right\vert $ and $\left\vert
t_{l}\right\vert $ can be used for calculating the number of necessary operations
(matrix products) to retrieve the matrices $\mathbb{M}%
_{k_{i}}$ out of the matrices $\mathbb{C}$ and $\mathbb{D}$. In fact,
instead of sending to Bob the coprime numbers $k_{1}$ and $k_{2}$ Alice
could choose to send the coefficients of the Bezout relation, $p_{6}=-110$
and $t_{6}=$ $469$, that are not necessarily coprime numbers, thus sparing
time for Bob to do any further calculations. However this approach has a
drawback, it asks a lot of memory for storage since huge matrices ($\mathbb{C%
},\mathbb{D}$) have to be exponentiated, $\mathbb{C}^{p_{l}}\mathbb{D}%
^{t_{l}}$, with quite large exponents, whereas considering the sequence of
matrices of the kind $\left( \left( \mathbb{M}_{k_{i}}\right) ^{-1}\right)
^{q_{i}}$ where the exponents (the quotients $q_{2},q_{3},...q_{r},q_{r+1}$)
are quite small compared with the coefficient in the Bezout identity 
-- so sparing the use of large amount of bytes and
computational time -- that turns out to be advantageous to Bob if his
computational capabilities are limited.
\end{expl}
\begin{expl} \label{ex3} 
For the coprime numbers $k_{1}=1001$ and $k_{2}=213$, the
sequence of modular calculations is given in Table \ref{b13} 
\begin{table}[htbp]
%\centering
\begin{tabular}{|c||c|l|l|c|}
\hline
$i+1$ & $k_{i+1}$ & $k_{i}\modu\ k_{i+1}= k_{i+2}$ & $%
q_{i+1}k_{i+1}+k_{i+2}=k_{i}$ & $q_{i+1}$ \\ \hline\hline
$2$ & $213$ & $1001\modu \ 213=149$ & $4\times 213+149=1001$ & $4$ \\ 
\hline
$3$ & $149$ & $213\modu\ 149=64$ & $1\times 149+64=213$ & $1$ \\ \hline
$4$ & $64$ & $149\modu \ 64=21$ & $2\times 64+21=149$ & $2$ \\ \hline
$5$ & $21$ & $64\modu\ 21=1$ & $3\times 21+1=64$ & $3$ \\ \hline
$6$ & $1$ & $21\modu\ 1=0$ & $21\times 1+0=21$ & $21$ \\ \hline
\end{tabular}
\caption{{\small{Sequence of the modular calculations with remainders and quotients.}}}
\label{b13}
\end{table}
with the quotients $\left( 4,1,2,3,21\right)$. From Eq. (\ref{b6}),   
the number of operations (without storage) is $\mathcal{N}_{op,5}=14$ 
and the sequence of calculation steps is 
\begin{subequations}
\label{b25}
\begin{eqnarray}
\mathbb{M}_{149} &=&\mathbb{M}_{1001}\left( \left( \mathbb{M}_{213}\right)
^{-1}\right) ^{4}=\mathbb{C}\left( \mathbb{D}^{-1}\right) ^{4},  \label{b25a}
\\
\mathbb{M}_{64} &=&\mathbb{M}_{213}\left( \mathbb{M}_{149}\right) ^{-1}=%
\mathbb{C}^{-1}\mathbb{D}^{5},  \label{b25b} \\
\mathbb{M}_{21} &=&\mathbb{M}_{149}\left( \left( \mathbb{M}_{64}\right)
^{-1}\right) ^{2}=\mathbb{C}^{3}\left( \mathbb{D}^{-1}\right) ^{14},
\label{b25c} \\
\mathbb{M}_{1} &=&\mathbb{M}_{64}\left( \left( \mathbb{M}_{21}\right)
^{-1}\right) ^{3}=\left( \mathbb{C}^{-1}\right) ^{10}\mathbb{D}^{47}\ .
\label{b25d}
\end{eqnarray}
\end{subequations}
The relations $p_{l}k_{1}+t_{l}k_{2}=k_{l+2}$ ($p_{l}t_{l}<0$) are
verified in Table \ref{b26}. 
\begin{table}[htbp]
\begin{tabular}{|l||c|c|c|}
\hline
$l$ & $p_{l}$ & $t_{l}$ & $k_{l+2}$ \\ \hline\hline
$1$ & \multicolumn{1}{r}{$1$} & \multicolumn{1}{|r|}{$-4$} & 
\multicolumn{1}{r|}{$149$} \\ \hline
$2$ & \multicolumn{1}{r}{$-1$} & \multicolumn{1}{|r|}{$5$} & 
\multicolumn{1}{r|}{$64$} \\ \hline
$3$ & \multicolumn{1}{r}{$3$} & \multicolumn{1}{|r|}{$-14$} & 
\multicolumn{1}{r|}{$21$} \\ \hline
$4$ & \multicolumn{1}{r}{$-10$} & \multicolumn{1}{|r|}{$47$} & 
\multicolumn{1}{r|}{$1$} \\ \hline
\end{tabular}%
\caption{\small{Coefficients and remainders.}}
\label{b26}
\end{table}
\end{expl}
The necessary times needed to encrypt and to decrypt a message are unbalanced. If
Alice has a high performance computer but Bob does not or he reckons on a
short time to decode the encryption, then the algorithm works well for him
as long as he knows the numbers $k_{1}$ and $k_{2}$, whereas an eavesdropper
should consume a time that can be comparatively much larger to break the
code and get $\mathbb{M}_{1},$ as to be discussed below. In order to
retrieve $\mathbb{M}_{1}$ from, for instance, the matrices of example 
(\ref{ex3}), $\mathbb{C=M}_{1001}$ and $\mathbb{D=M}_{213}$, Bob has to 
perform $13$ operations: the
factor $\left( \mathbb{C}^{-1}\right) ^{10}$ needs $4$ and $\mathbb{D}^{47}$
needs $8$, thus to get $\mathbb{M}_{1}=\left( \mathbb{C}^{-1}\right) ^{10}%
\mathbb{D}^{47}$, $13$ operations with storage of partial products are
necessary. However, to produce the matrices $\mathbb{M}_{1001}$ and $\mathbb{%
M}_{213}$ Alice has to perform $14$ plus $3$ matrix products (with storage)
respectively, so $17$ in total.
%
%%%%%%%%%%%%%%%%%%%%%%%%%%%%%%%%%%%%%%%%%%%%%%%%%%%%%%%%%%
\section{Theorems: continued fractions and quotients} \label{cfaq}
%%%%%%%%%%%%%%%%%%%%%%%%%%%%%%%%%%%%%%%%%%%%%%%%%%%%%%%%%%%
%
We now explore the relation between the exponent $k_{1}$ and $k_{2}$
and a continued fraction involving the quotients, and show its usefulness
for Alice to choose the numbers $k_{1}$ and $k_{2}$. We make use of a
theorem presented in \cite{knuth2,khinchin}:

\begin{thm}\label{teor2}
There is a one-to-one correspondence between the coprime numbers $k_{1}$ 
and $k_{2}$ and the sequence of quotients that makes the continued fraction 
$\left/ \left/ q_{2},q_{3},...,q_{r},q_{r+1}\right/\right/ $ (it is customary 
to denote the sequence of numbers in a continued fraction between two double 
slashes $\left/ \left/ ...\right/ \right/ $). Alternatively, 
given a sequence of chosen positive integers 
$\left( q_{2},q_{3},...,q_{r},q_{r+1}\right) $, the calculation of the finite
continued fraction, involving the $q_{i}$'s, results in the ratio of coprime
numbers $k_{1}/k_{2}$.
\end{thm}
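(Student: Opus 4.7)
The plan is to establish the bijection in two directions, exploiting the uniqueness of Euclidean division at every step. In the forward direction, given coprime $k_{1}>k_{2}>0$, the sequence (\ref{B4}) is well-defined and unique because the division algorithm (Lemma \ref{lema1}) supplies, for each pair $(k_{i-1},k_{i})$, a unique nonnegative quotient $q_{i}$ and a unique remainder $0\le k_{i+1}<k_{i}$. The remainders strictly decrease, so the process must terminate, and the standard invariant $\gcd(k_{i-1},k_{i})=\gcd(k_{i},k_{i+1})$ forces the last nonzero remainder to equal $\gcd(k_{1},k_{2})=1$, i.e.\ $k_{r+1}=1$. Carrying one further step, as in (\ref{B4}) and as illustrated in the last rows of the tables of Examples \ref{ex2} and \ref{ex3}, fixes $q_{r+1}=k_{r}\ge 2$. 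To obtain the continued-fraction representation I would divide each line of (\ref{B4}) by its respective $k_{i}$, obtaining $k_{i-1}/k_{i}=q_{i}+1/(k_{i}/k_{i+1})$, and substitute recursively starting from $i=2$ until the chain terminates at $k_{r}/k_{r+1}=q_{r+1}$.

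For the reverse direction, given a tuple $(q_{2},\ldots,q_{r+1})$ of positive integers with $q_{r+1}\ge 2$, I would reconstruct the pair by running (\ref{B4}) backwards: set $k_{r+1}:=1$ and $k_{r}:=q_{r+1}$, then for $i=r,r-1,\ldots,2$ define $k_{i-1}:=q_{i}k_{i}+k_{i+1}$. Since each $q_{i}\ge 1$ and $k_{i+1}\ge 1$, the $k_{i}$ form a strictly increasing sequence of positive integers, and in particular $k_{1}>k_{2}$. Each triple $(k_{i-1},k_{i},k_{i+1})$ satisfies the Euclidean step with $0\le k_{i+1}<k_{i}$, hence $\gcd(k_{i-1},k_{i})=\gcd(k_{i},k_{i+1})$; iterating yields $\gcd(k_{1},k_{2})=\gcd(k_{r},1)=1$, so the reconstructed pair is coprime. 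The same division-by-$k_{i}$ manipulation then verifies that $k_{1}/k_{2}$ equals the continued fraction built from $(q_{2},\ldots,q_{r+1})$, and the two procedures are manifestly inverse to one another.

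The main obstacle is the classical non-uniqueness of finite continued fractions, namely the identity $\left/\left/q_{2},\ldots,q_{r},q_{r+1}\right/\right/=\left/\left/q_{2},\ldots,q_{r},q_{r+1}-1,1\right/\right/$ valid whenever $q_{r+1}\ge 2$; without a normalization the claimed bijection simply fails. The algorithm (\ref{B4}) supplies exactly the right canonical form: because the Euclidean step is carried out one stage past the emergence of $k_{r+1}=1$, the terminal quotient is forced to equal $q_{r+1}=k_{r}\ge 2$, which rules out tuples ending in $1$. With this restriction in place, the uniqueness clause of the division algorithm at every step promotes the two constructions above to mutually inverse bijections between coprime pairs $(k_{1},k_{2})$ with $k_{1}>k_{2}$ and admissible quotient tuples $(q_{2},\ldots,q_{r+1})$ satisfying $q_{i}\ge 1$ for all $i$ and $q_{r+1}\ge 2$, completing the proof.
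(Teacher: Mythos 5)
Your argument is correct, and it is in fact more complete than the one the paper gives. The paper's proof of Theorem \ref{teor2} consists solely of the forward computation: dividing each line of (\ref{B4}) by the appropriate $k_i$ to get $k_{i-1}/k_i = q_i + 1/(k_i/k_{i+1})$ and substituting recursively until the chain terminates with $q_{r+1}=k_r$. That establishes that the Euclidean quotients furnish \emph{a} continued-fraction expansion of $k_1/k_2$, but it does not by itself establish a one-to-one correspondence (the paper defers uniqueness considerations to Theorem \ref{theor3} in the appendix and to the cited references). You reproduce the paper's forward computation, but you add the two ingredients the paper omits: the explicit reverse construction (running (\ref{B4}) backwards from $k_{r+1}=1$, $k_r=q_{r+1}$, with the check that each backward step $k_{i-1}=q_i k_i + k_{i+1}$ is a legitimate Euclidean division because $k_{i+1}<k_i$, so the forward algorithm applied to the reconstructed pair returns the same quotients), and — most importantly — the observation that finite continued fractions are classically \emph{not} unique because of the identity $\left/\left/\dots,q_{r+1}\right/\right/=\left/\left/\dots,q_{r+1}-1,1\right/\right/$, so that the claimed bijection only holds under the normalization $q_{r+1}=k_r\ge 2$ that the algorithm automatically enforces. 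This last point is a genuine improvement: without it the theorem as literally stated is false, and the paper never addresses it. Two minor remarks: the degenerate case $k_2=1$ (where the expansion collapses to the single term $q_2=k_1$) falls outside your indexing and deserves a sentence, and the claim that the two constructions are "manifestly inverse" leans on the strict inequality $k_{i+1}<k_i$ in the backward direction, which you assert but could justify in one line from $k_i=q_{i+1}k_{i+1}+k_{i+2}\ge k_{i+1}+1$.
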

\begin{proof}
We write the equations that are on the left side in the first and second
lines of the set (\ref{B4}) as
\begin{eqnarray}
\frac{k_{1}}{k_{2}} &=&q_{2}+\frac{k_{3}}{k_{2}},  \label{b4a} \\
\frac{k_{2}}{k_{3}} &=&q_{3}+\frac{k_{4}}{k_{3}},  \label{b4b}
\end{eqnarray}
we then substitute $\left( k_{2}/k_{3}\right) ^{-1}$ from Eq. (\ref{b4b})
into Eq. (\ref{b4a}) to obtain 
\begin{equation}
\frac{k_{1}}{k_{2}}=q_{2}+\cfrac{1}{q_{3}+\cfrac{1}{k_{3}/k_{4}}}\ .  
\label{b4c}
\end{equation}
One repeats this operation for the finite continued fraction for all the 
$q_{i}$'s and the result is the ratio 
\begin{equation}
\frac{k_{1}}{k_{2}}=q_{2}+\cfrac{1}{q_{3}+\cfrac{1}{q_{4}+\cfrac{1}{\ddots +%
\cfrac{1}{q_{r+1}}}}}:=\left/ \left/ q_{2},q_{3},...,q_{r},q_{r+1}\right/
\right/  \label{b4d}
\end{equation}
where $q_{r+1}=k_{r}$, because $k_{r}=q_{r+1}k_{r+1}+0=q_{r+1}\times 1+0$.
\end{proof}
\begin{expl} \label{ex4} 
For the coprime numbers $\left( k_{1},k_{2}\right) =\left(
1019,239\right) $, the sequence of modular calculations in Table \ref{b7} permits
us to write the finite continued fraction (\ref{b4d}) as 
\begin{equation}
\left/ \left/ 4,3,1,3,1,5,2\right/ \right/ =4+\cfrac{1}{3+\cfrac{1}{1+\cfrac{1}{%
3+\cfrac{1}{1+\cfrac{1}{5+\cfrac{1}{2}}}}}}=\cfrac{1019}{239}\ ,  \label{b4f}
\end{equation}
which is the ratio $k_{1}/k_{2}$. See also \cite{khinchin}
\end{expl}

Another theorem related to Theorem \ref{teor2} and useful for our method follows:
\begin{thm} \label{teor3} 
A one-to-one correspondence between the coprime numbers $%
\left( k_{1},k_{2}\right) $ and the sequence of quotients $\left( 
q_{2},q_{3},...,q_{r},q_{r+1}\right)  $ of Eq. (\ref{B4}) exists and is given as
the inverse of the product of a finite number of $2\times 2$ matrices, 
\begin{equation}
\mathbb{B}_{r} := \left[ \prod_{l=2}^{r}\left( 
\begin{array}{cc}
q_{l} & 1 \\ 
1 & 0%
\end{array}%
\right) \right]^{-1},  
\label{bez}
\end{equation}
where 
\begin{equation} 
\mathbb{B}_{r} \left( 
\begin{array}{c}
k_{1} \\ 
k_{2}%
\end{array}%
\right) =\left( 
\begin{array}{c}
k_{r} \\ 
1%
\end{array}%
\right),  \label{b4g}
\end{equation}
and reminding that $k_{r}=q_{r+1}$.
\end{thm}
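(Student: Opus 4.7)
The plan is to recognize each line of the Euclidean reduction in Eq.~(\ref{B4}) as a linear change of variable between successive pairs of remainders, and then telescope. The bijection part can be inherited from Theorem~\ref{teor2}, so the real content to establish is the matrix identity~(\ref{b4g}).

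First I would rewrite the $l$-th line of Eq.~(\ref{B4}), namely $k_{l-1} = q_l k_l + k_{l+1}$, as the matrix relation
\begin{equation*}
\begin{pmatrix} k_{l-1} \\ k_l \end{pmatrix}
= \begin{pmatrix} q_l & 1 \\ 1 & 0 \end{pmatrix}
  \begin{pmatrix} k_l \\ k_{l+1} \end{pmatrix},
\qquad l = 2,3,\dots,r.
\end{equation*}
The bottom row $k_l = k_l$ is trivially correct, and the top row is just the Euclidean equation; this is the base observation.

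Next I would iterate this identity from $l=2$ up to $l=r$, composing the $2\times 2$ factors and telescoping the right-hand side. This gives
\begin{equation*}
\begin{pmatrix} k_1 \\ k_2 \end{pmatrix}
= \left[\prod_{l=2}^{r}\begin{pmatrix} q_l & 1 \\ 1 & 0 \end{pmatrix}\right]
  \begin{pmatrix} k_r \\ k_{r+1} \end{pmatrix}.
\end{equation*}
Because $(k_1,k_2)$ are coprime, the algorithm terminates with $k_{r+1}=1$, and the last line of~(\ref{B4}) forces $k_r = q_{r+1}$. Each factor $\bigl(\begin{smallmatrix} q_l & 1 \\ 1 & 0 \end{smallmatrix}\bigr)$ has determinant $-1$, hence is invertible over $\mathbb{Z}$; inverting the product yields exactly $\mathbb{B}_r$ as in Eq.~(\ref{bez}), and substituting $k_{r+1}=1$ gives Eq.~(\ref{b4g}).

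For the one-to-one correspondence, I would appeal to Theorem~\ref{teor2}: given coprime $(k_1,k_2)$, the Euclidean sequence $(q_2,\dots,q_{r+1})$ is unique, and conversely any such sequence produces, via the matrix product above, a unique pair $(k_1,k_2)$ whose coprimality is automatic since
\begin{equation*}
\gcd(k_1,k_2) \;=\; \gcd(k_r,k_{r+1}) \;=\; \gcd(k_r,1) \;=\; 1.
\end{equation*}
The main obstacle I anticipate is nothing more than bookkeeping: making sure that the order of the factors in the product $\prod_{l=2}^r$ matches the order in which the substitutions are carried out when telescoping, and that the terminating pair is correctly read off as $(k_r,1)$ with $k_r=q_{r+1}$ rather than the preceding pair; once the base matrix identity is written down, the rest is mechanical.
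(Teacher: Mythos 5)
Your proposal is correct and follows essentially the same route as the paper: both encode each Euclidean step $k_{l-1}=q_l k_l+k_{l+1}$ as multiplication by $\bigl(\begin{smallmatrix} q_l & 1 \\ 1 & 0 \end{smallmatrix}\bigr)$, telescope the substitutions down to $(k_r,k_{r+1})^{T}=(k_r,1)^{T}$, and invert the accumulated product to obtain Eq.~(\ref{b4g}). Your added remarks on the determinant $-1$ and the $\gcd$ computation are harmless refinements of the same argument.
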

\begin{proof}
We consider the equations on the right side of the set (\ref{B4}) 
and write the first two equations as 
\begin{equation}
\left( 
\begin{array}{c}
k_{1} \\ 
k_{2}%
\end{array}%
\right) =\left( 
\begin{array}{cc}
q_{2} & 1 \\ 
1 & 0%
\end{array}%
\right) \left( 
\begin{array}{c}
k_{2} \\ 
k_{3}%
\end{array}%
\right)\ ;  \label{b4h}
\end{equation}
proceeding in the same fashion for the second and third equations,
\begin{equation}
\left( 
\begin{array}{c}
k_{2} \\ 
k_{3}%
\end{array}%
\right) =\left( 
\begin{array}{cc}
q_{3} & 1 \\ 
1 & 0%
\end{array}%
\right) \left( 
\begin{array}{c}
k_{3} \\ 
k_{4}%
\end{array}%
\right)  \label{b4i}
\end{equation}
and substituting Eq. (\ref{b4i}) into Eq. (\ref{b4h}) we get
\begin{equation}
\left( 
\begin{array}{c}
k_{1} \\ 
k_{2}%
\end{array}%
\right) =\left( 
\begin{array}{cc}
q_{2} & 1 \\ 
1 & 0%
\end{array}%
\right) \left( 
\begin{array}{cc}
q_{3} & 1 \\ 
1 & 0%
\end{array}%
\right) \left( 
\begin{array}{c}
k_{3} \\ 
k_{4}%
\end{array}%
\right) .  \label{b4j}
\end{equation}
By repeating the process of substitution, with the last $k_{r+1}=1$, and then 
inverting the relation one gets Eq. (\ref{b4g}).
\end{proof}

\begin{expl} \label{ex5} 
For the prime numbers $k_{1}= 1019$ and $k_{2}=239$, the sequence
of quotients is $\left( q_{2},q_{3},...,q_{8}\right) =\left(
4,3,1,3,1,5,2\right) $, with $r=7$ and $k_{7}=2$; so 
\begin{eqnarray}
\left( 
\begin{array}{c}
1019 \\ 
239%
\end{array}%
\right)  &=&\left[ \left( 
\begin{array}{cc}
4 & 1 \\ 
1 & 0%
\end{array}%
\right) \left( 
\begin{array}{cc}
3 & 1 \\ 
1 & 0%
\end{array}%
\right) \left( 
\begin{array}{cc}
1 & 1 \\ 
1 & 0%
\end{array}%
\right) \left( 
\begin{array}{cc}
3 & 1 \\ 
1 & 0%
\end{array}%
\right) \right.   \notag \\
&&\left. \left( 
\begin{array}{cc}
1 & 1 \\ 
1 & 0%
\end{array}%
\right) \left( 
\begin{array}{cc}
5 & 1 \\ 
1 & 0%
\end{array}%
\right) \right] \left( 
\begin{array}{c}
2 \\ 
1%
\end{array}%
\right) .  \label{b4k}
\end{eqnarray}
\end{expl}
\begin{expl} \label{ex6}
The relations between the continued fraction (\ref{b4d}) and the matrix (\ref%
{b4g}) with $r=4$, for instance, is 

\begin{eqnarray}
\left( 
\begin{array}{c}
k_{1} \\ 
k_{2}%
\end{array}%
\right)  &=&\left[ \left( 
\begin{array}{cc}
q_{2} & 1 \\ 
1 & 0%
\end{array}%
\right) \left( 
\begin{array}{cc}
q_{3} & 1 \\ 
1 & 0%
\end{array}%
\right) \left( 
\begin{array}{cc}
q_{4} & 1 \\ 
1 & 0%
\end{array}%
\right) \right] \left( 
\begin{array}{c}
k_{4} \\ 
k_{5}%
\end{array}%
\right)   \notag \\
&=&\mathbb{B}_{4}^{-1}\left( 
\begin{array}{c}
k_{4} \\ 
k_{5}%
\end{array}%
\right) =\left( 
\begin{array}{c}
q_{2}\left( k_{4}+k_{4}q_{3}q_{4}+k_{5}q_{3}\right) +k_{4}q_{4}+k_{5} \\ 
\left( k_{4}+k_{5}q_{3}+k_{4}q_{3}q_{4}\right) 
\end{array}%
\right) .  
\label{b4l}
\end{eqnarray}
and the ratio $k_{1}/k_{2} $ can be written as a continued fraction, 
\begin{eqnarray}
\frac{k_{1}}{k_{2}} &=&q_{2}+\frac{k_{4}q_{4}+k_{5}}{k_{4}+q_{3}\left(
k_{4}q_{4}+k_{5}\right) }=q_{2}+\cfrac{1}{q_{3}+\cfrac{k_{4}}{k_{4}q_{4}+k_{5}}%
}  \label{b4m} \\
&=&q_{2}+\cfrac{1}{q_{3}+\cfrac{1}{q_{4}+\cfrac{k_{5}}{k_{4}}}},  \notag
\end{eqnarray}
and in general one has Eq. (\ref{b4d}).
\end{expl}

The coefficients $p_l$ and $t_l$ are obtained by inverting Eqs. (\ref%
{b4h}) and (\ref{b4j}), 
\begin{equation}
\left( 
\begin{array}{cc}
q_{2} & 1 \\ 
1 & 0%
\end{array}%
\right) ^{-1}\left( 
\begin{array}{c}
k_{1} \\ 
k_{2}%
\end{array}%
\right) =\left( 
\begin{array}{c}
k_{2} \\ 
k_{1}-q_{2}k_{2}%
\end{array}%
\right) =\left( 
\begin{array}{c}
k_{2} \\ 
k_{3}%
\end{array}%
\right) ,  \label{b14}
\end{equation}
\begin{equation}
\left( \left( 
\begin{array}{cc}
q_{2} & 1 \\ 
1 & 0%
\end{array}%
\right) \left( 
\begin{array}{cc}
q_{3} & 1 \\ 
1 & 0%
\end{array}%
\right) \right) ^{-1}\left( 
\begin{array}{c}
k_{1} \\ 
k_{2}%
\end{array}%
\right) =\left( 
\begin{array}{c}
k_{1}-k_{2}q_{2} \\ 
k_{2}\left( q_{2}q_{3}+1\right) -k_{1}q_{3}%
\end{array}%
\right) =\left( 
\begin{array}{c}
k_{3} \\ 
k_{4}%
\end{array}%
\right)   \label{b15}
\end{equation}
and with the matrix 
\begin{equation}
\mathbb{B}_{4}:=\left[ \prod_{i=2}^{4}\left( 
\begin{array}{cc}
q_{i} & 1 \\ 
1 & 0%
\end{array}%
\right) \right] ^{-1}  \label{b15a}
\end{equation}
we get the relation, 
\begin{equation}
\mathbb{B}_{4} \left( 
\begin{array}{c}
k_{1} \\ 
k_{2}%
\end{array}%
\right) =\left( 
\begin{array}{c}
k_{2}\left( q_{2}q_{3}+1\right) -k_{1}q_{3} \\ 
k_{1}\left( q_{3}q_{4}+1\right) -k_{2}\left(
q_{2}+q_{4}+q_{2}q_{3}q_{4}\right) 
\end{array}%
\right) =\left( 
\begin{array}{c}
k_{4} \\ 
k_{5}%
\end{array}%
\right) .  \label{b16}
\end{equation}
The coefficients are 
\begin{eqnarray*}
\mathbb{B}_{3} = \left( 
\begin{array}{cc}
p_{1} & t_{1} \\ 
p_{2} & t_{2}%
\end{array}%
\right)  &=& \left( 
\begin{array}{cc}
1 & -q_{2} \\ 
-q_{3} & q_{2}q_{3}+1%
\end{array}%
\right)  \\
\mathbb{B}_{4} = \left( 
\begin{array}{cc}
p_{2} & t_{2} \\ 
p_{3} & t_{3}%
\end{array}%
\right)  &=& \left( 
\begin{array}{cc}
-q_{3} & q_{2}q_{3}+1 \\ 
q_{3}q_{4}+1 & -q_{2}-q_{4}-q_{2}q_{3}q_{4}%
\end{array}%
\right) ,
\end{eqnarray*}
thus the sequence of equations
\begin{eqnarray}
k_{1}-q_{2}k_{2} &=&k_{3}  \notag \\
-q_{3}k_{1}+\left( q_{2}q_{3}+1\right) k_{2} &=&k_{4}  \label{b18} \\
k_{1}\left( q_{3}q_{4}+1\right) -\left( q_{2}+q_{4}+q_{2}q_{3}q_{4}\right)
k_{2} &=&k_{5}  \notag
\end{eqnarray}
represents the relations $p_{l}k_{1}+t_{l}k_{2}=k_{l+2}$ with $%
l=1,..,,r$. Thus we can write 
\begin{equation}
\mathbb{B}_{r} \left( 
\begin{array}{c}
k_{1} \\ 
k_{2}%
\end{array}%
\right) =\left( 
\begin{array}{cc}
p_{r-1} & t_{r-1} \\ 
p_{r} & t_{r}%
\end{array}%
\right) \left( 
\begin{array}{c}
k_{1} \\ 
k_{2}%
\end{array}%
\right) =\left( 
\begin{array}{c}
k_{r+1} \\ 
k_{r+2}%
\end{array}%
\right)   \label{b19}
\end{equation}
and the coefficients $p_{l}$ and $t_{l}$, see Eqs. (\ref{b11}) and (\ref{b26}%
), depend only on the quotients, as shown in Table \ref{b20} 
and we can confront these expression with those calculated in Example \ref%
{ex3}, for instance, $\left( q_{2},...,q_{6}\right) =\left( 4,1,2,3,21\right) $.
\begin{table}[htbp]
\begin{tabular}{|l||c|c|}
\hline
$l$ & $p_{l}$ & $t_{l}$ \\ \hline\hline
$1$ & $1$ & $-q_{2}$ \\ \hline
$2$ & $-q_{3}$ & $q_{2}q_{3}+1$ \\ \hline
$3$ & $q_{3}q_{4}+1$ & $-\left( q_{2}\left( 1+q_{3}q_{4}\right)
+q_{4}\right) $ \\ \hline
$4$ & $-q_{3}\left( 1+q_{4}q_{5}\right) -q_{5}$ & $%
q_{2}q_{3}+q_{2}q_{5}+q_{4}q_{5}+q_{2}q_{3}q_{4}q_{5}+1$ \\ \hline
\end{tabular}%
\caption{\small{The coefficients in terms of the quotients.}}
\label{b20}
\end{table}
%
%%%%%%%%%%%%%%%%%%%%%%%%%%%%%%%%%%%%%%%%%%%%%%%%%%%%%%%%%
\section{Two strategies for a more secure Alice-Bob communication} 
\label{strategy}
%%%%%%%%%%%%%%%%%%%%%%%%%%%%%%%%%%%%%%%%%%%%%%%%%%%%%%%%%%%
%
%%%%%%%%%%%%%%%%%%%%%%%%%%%%%%%%%%%%%%%%%%
\subsection{Choosing the exponents $k_1$ and $k_2$}\label{choosing}
%%%%%%%%%%%%%%%%%%%%%%%%%%%%%%%%%%%%%%%%%
%
For the construction of the matrices $\mathbb{M}_{k_{1}}$ and $\mathbb{M}%
_{k_{2}}$, we consider that the best strategy for Alice (or Bob) consists 
in choosing a certain sequence of positive integers $\left(
q_{2},q_{3},...q_{r},q_{r+1}\right) $ and then to calculate the continued 
fraction $\left/ \left/q_{2},q_{3},...q_{r},q_{r+1}\right/ \right/ $, from which she (or he) 
obtains the coprime numbers $k_{1}$ and $k_{2}$ to be used as exponents. We envisage this
procedure as a good strategy because according to a theorem exposed in \cite%
{knuth2}, in the sequence $\left( q_{2},q_{3},...,q_{8}\right) $ for the
quotients in Euclid algorithm, the approximate probability for one integer $%
q_{j}$ to take the value $a$ is 
\begin{equation}
p\left( a\right) =\log _{2}\left( 1+\frac{1}{a}\right) -\log _{2}\left( 1+%
\frac{1}{a+1}\right) =\log _{2}\left( 1+\frac{\left( a+1\right) ^{2}}{\left(
\left( a+1\right) ^{2}-1\right) }\right) ,  \label{b5p}
\end{equation}
thus $p\left( 1\right) =0.41505$, $p\left( 2\right) =0.16993$, $p\left(
3\right) =0.09311$, $p\left( 4\right) =0.05889$, etc. So Alice (or Bob)
should choose quotients that contradict that pattern in order to fool an
eavesdropper, turning the decryption a hard task, i.e., more computational
time consuming. The statistics of appearance of the numbers $1$ to $4$ in a
sequence of quotients $\left( q_{2},q_{3},...q_{r},q_{r+1}\right) $, that
comes from a fraction $k_{1}/k_{2}$, with $k_{1}$ and $k_{2}$ being coprime
numbers picked at random, has probability $74\%$. Therefore, if Alice (or
Bob) believes that to guess the matrix $\mathbb{M}$ the eavesdropper Eve is
going to use mostly the numbers $1$ to $4$, thus Alice (or Bob) have the option 
to construct $k_{1}$ and $k_{2}$ using most of the quotients $q_{j}$ higher 
than $4$. 
%
%%%%%%%%%%%%%%%%%%%%%%%%%%%%%%%%%%%%%%%%%%%%%%%%%%%%%%%%%5
\subsection{Disguising the ratio $k_1/k_2$}\label{disguising}
%%%%%%%%%%%%%%%%%%%%%%%%%%%%%%%%%%%%%%%%%%%%%%%%%%%%%%%%%%%%%
%
The matrix (\ref{a3}) has determinant $19+4i$ and trace $-i$; raising it to the 
two prime numbers $k_{1}=19$ and $k_{2}=13$, for instance, results in 
\begin{equation}
\mathbb{C}=\left( 
\begin{array}{cc}
513\, 852\,777\, 463- i\, 2298\, 638\,980\, 668 & 
-2931\, 007\,733\,267-i\, 5289\, 191\,614\,510 \\ 
-4739\, 850\,067\,058-i\, 3755\, 020\,054\,445 & 
-14\,804\, 380\,518\,615-i\, 191\, 643\,568\,579%
\end{array}%
\right) ,  \label{aa3}
\end{equation}
and
\begin{equation}
\mathbb{D}=\left( 
\begin{array}{cc}
14\,127\,153+ i\, 166\,148\,948 & 378\,280\,035+i\, 217\,438\,622 \\ 
432\,811\,810+i\,55\,220\,253 & 881\,816\,207-i\, 535\,190\,869%
\end{array}%
\right) .  \label{aa4}
\end{equation}
Utilizing the Algorithm \ref{alg1} one retrieves the original matrix $\left( 
\mathbb{C}^{-1}\right) ^{2}\mathbb{D}^{3}=\mathbb{M}$. However, there is 
a weak point that makes the task easier to a cryptanalyst: he has a direct 
access to the ratio $k_{1}/k_{2}$ since $\left( \ln \left( \det 
\mathbb{C}\right) \right) /\left( \ln \left( \det \mathbb{D}\right) \right) 
=k_{1}/k_{2}$, and in the current numerical example the result is
\be 
\frac{\ln \left( \det \mathbb{C}\right)}{\ln \left( \det \mathbb{D}\right)}
\approx 1.4502-i\ 0.1622\, ,
\ee
which is a strong clue to guess the exponents since the real part is quite 
close to $19/13\approx 1.4615$. Another relation relies on the difference 
$\ln \left( \det \mathbb{C}\right) -\ln \left( \det \mathbb{D}\right) =
\left( k_{1}-k_{2}\right) \ln \left( \det \mathbb{M}\right) $; but the 
presence of the unkown $\det \mathbb{M}$ does not introduce an additional 
advantage to guess $\left( k_{1}-k_{2}\right) $. 

In order to blur the relation $\left( \ln \left( \det \mathbb{C}\right)
\right) /\left( \ln \left( \det \mathbb{D}\right) \right) =k_{1}/k_{2}$, 
Alice can adopt a simple additional procedure, without affecting sensibly the 
sequence of steps of the Algorithm \ref{alg1}, making it part of the protocol. 
Instead of placing publicly the matrices $\mathbb{C}$ and $\mathbb{D}$, 
she will place other two associated matrices as to be explained below, 
although without ruling out that other simple procedures can be invented and 
be more efficient. Firstly she chooses two positive integer numbers, $m$ and $l$, 
and calculates the inverse modulo $m$ equation for $k_{1}$ and $k_{2}$, i.e.
$\left( k_{i}x_{i}\right) \modu m=1$, $i=1,2$, so that she gets the class of 
equivalent solutions, $x_{i}\left( l\right) = ml+x_{i}\left( 0\right) $, 
$l=0,1,2,...\, $. Thus the inverse modulo of $k_{i}$ is 
$\left[ k_{i}^{-1}\right] =x_{i}\left( l\right) $ and any value $l$ can be picked 
by Alice (or Bob), $\bar{k}_{i}:=x_{i}\left( l\right) $, and she defines the 
complex number $z=\bar{k}_{1} +i\ \bar{k}_{2}$. A set of $L$ matrices is constructed 
and displayed publicly, $\mathcal{M} = \left\{ \mathbb{Z}\left( z, z^*\right) \right\}$. 
These matrices have the same dimension of matrix $\mathbb{M}$, they are different 
from each other, they are enumerable (as $p_1,...p_r,...p_s,...p_L$) and the argument 
$z$ is not defined numerically. Alice then picks two matrices of the set $\mathcal{M}$ 
that she calls $\mathbb{Z}_{C}\left( z, z^*\right) $ and $\mathbb{Z}_{D}\left( z, z^*\right) $, 
where they now depend on the numerical values $\bar{k}_{1},\bar{k}_{2}$. In sequence 
Alice calculates the matrices $\mathbb{\bar{C}=Z}_{C}\mathbb{C}$ and 
$\mathbb{\bar{D}=Z}_{D}\mathbb{D}$ that she places publicly. Now the key Alice 
has to share with Bob, in order to enable him to get $\mathbb{M}$, is constituted by 
the 6-tuple $\left( k_{1},k_{2},m,l,p_r,p_s \right):= K $, that he uses to calculate 
$\left( \left( \mathbb{Z}_{C}^{-1}\mathbb{\bar{C} }\right) ^{-1}\right) ^{2}
\left( \mathbb{Z}_{D}^{-1}\mathbb{\bar{D}}\right) ^{3}=\mathbb{M}$. If a 
cryptanalyst calculates the ratio $\left( \ln \left( \det \mathbb{\bar{C}}\right) 
\right) /\left( \ln \left(\det \mathbb{\bar{D}}\right) \right) $ she/he will not 
extract an obvious information about the ratio $k_{1}/k_{2}$, as before, due 
to the presence of $\ln \det \left( \mathbb{Z}_{C}\right) $ and $\ln \det 
\left( \mathbb{Z}_{D}\right)$, 
\begin{equation}
\frac{\ln \left( \det \mathbb{\bar{C}}\right) }{\ln \left( \det \mathbb{\bar{%
D}}\right) }=\frac{\ln \det \left( \mathbb{Z}_{C}\right) +k_{1}\ln \det 
\mathbb{M}}{\ln \det \left( \mathbb{Z}_{D}\right) +k_{2}\ln \det \mathbb{M}}.
\label{r1}
\end{equation}
\begin{expl} \label{ex6a}
%For example, 
Alice adopts $m=11$ and $l=2$: $\bar{k}_{1}:=x_{1}\left(
2\right) =11\times 2+7=29$ and  $\bar{k}_{2}:=x_{2}\left( 2\right) =11\times
2+6=28$; she defines $z=\bar{k}_{1}+i\bar{k}_{2}=29+i28$ and chooses, for instance, 
from the set $\mathcal{M}$ the matrices,
\begin{equation}
\mathbb{Z}_{C}\left( z, z^*\right) =\left( 
\begin{array}{cc}
z & 1 \\ 
-1 & z^{\ast }%
\end{array}%
\right) ,\quad \mathbb{Z}_{D}\left( z, z^*\right) =\left( 
\begin{array}{cc}
i & z \\ 
z^{\ast } & -i%
\end{array}%
\right) ,  \label{zz}
\end{equation}
then the ratio (\ref{r1}) becomes
\begin{equation}
\frac{\ln \left( \det \mathbb{\bar{C}}\right) }{\ln \left( \det \mathbb{\bar{%
D}}\right) }=\frac{\ln \left( \left\vert z\right\vert ^{2}+1\right)
+k_{1}\ln \det \mathbb{M}}{\ln \left( -\left\vert z\right\vert ^{2}+1\right)
+k_{2}\ln \det \mathbb{M}} \approx 1.0456 -i\, 5.696\times 10^{-2},
\label{r2}
\end{equation}
whose real part does not give a hint about the ratio $k_{1}/k_{2}\approx 1.4615$, 
as in the case where Alice does not uses in her protocol the matrices 
$\mathbb{Z}_{C}$ and $\mathbb{Z}_{D}$. This kind of disguising is not very 
costly: the multiplications of $\mathbb{C}$ and $\mathbb{D}$ by $\mathbb{Z}_{C}$ 
and $\mathbb{Z}_{D}$ add only two additional operations for Alice, while for Bob, 
the task to retrieve $\mathbb{M}$ needs the use of $\mathbb{Z}_{C}^{-1}$ and 
$\mathbb{Z}_{D}^{-1}$, that introduces four additional operations: two inversions 
and two multiplications. Worth to note that the 6-tuple key $K$ can also be split 
into two (or more) keys $K_1$ and $K_2$ that complement each other.
\end{expl} 
%
%%%%%%%%%%%%%%%%%%%%%%%%%%%%%%%%%%%%%%%%%%%
\section{Upper bound for the number operations to construct a matrix $\mathbb{%
M}^{k}$}\label{upperbound}
%%%%%%%%%%%%%%%%%%%%%%%%%%%%%%%%%%%%%%%%%%%%%
%
The number of operations, $\mathcal{N}_{op}\left( k\right) $, Alice has to
perform (the product of $k$ matrices, namely $\mathbb{M}^{k}$) is $k-1$ \emph{%
without storage} of partial products (in the example \ref{ex2}, there should
be $1000$ of such products). Now, if Alice stores the already multiplied
matrices, namely $\mathbb{M}^{2}$, $\mathbb{M}^{3}...\mathbb{M}^{k-1}$ and
uses them for further operations, the necessary number of operations to
calculate $\mathbb{M}^{k}$ is significantly reduced, we estimate that $ 2%
\sqrt{k}$ is an upper bound for $k\gg 1$, as we presume that other approaches could 
reduce this number. To prove this result we assume that for a given $%
\mathbb{M}$ one calculates $\mathbb{M}^{2}$ and store it, so the available 
matrices are $S_{2}=\left\{ \mathbb{M},\mathbb{M}^{2}\right\} $ and the
number of multiplications is reduced from $k-1$ to 
\begin{equation}
\mathcal{N}_{op}\left( k|S_{2}\right) =1+\left[ \frac{k-1}{2}\right] \ ,
\label{c1}
\end{equation}
for $k>2$, where the square brackets $\left[ x\right] $ means the integer 
part of $x$. Thus, for $k=113$, for instance, the number of multiplications 
is reduced from $112$ without storage to 57 with $S_{2}$ storage, as shown 
in Table \ref{c2}. For 
$S_{3}=\left\{ \mathbb{M},\mathbb{M}^{2},\mathbb{M}^{3}\right\} $ 
we have to perform two operations to produce $\mathbb{M}^{2}$ and 
$\mathbb{M}^{3}$ to be then stored. Thus, for $k>3$ the number of operations 
is reduced from $57$ (storage using the set $S_{2}$) to $39$ when one uses 
the set $S_{3}$, as shown  in Table \ref{c2}.
\begin{equation}
\mathcal{N}_{op}\left( k|S_{3}\right) =2+\left[ \frac{k-1}{3}\right]\ .
\label{c3}
\end{equation}
\begin{table}[H]
\centering
\begin{tabular}{|c|c|c|c|c|c|c|c|c|c|c|l|l|l|}
\hline
$k$ & $3$ & $4$ & $5$ & $6$ & $7$ & $8$ & $9$ & $10$ & $11$ & $12$ & $\cdots 
$ & $113$ & $\cdots $ \\ \hline
$\mathcal{N}_{op}\left( k|S_{2}\right) $ & $2$ & $2$ & $3$ & $3$ & $4$ & $4$
& $5$ & $5$ & $6$ & $6$ & \multicolumn{1}{c|}{$\cdots $} & 
\multicolumn{1}{c|}{$57$} & \multicolumn{1}{c|}{$\cdots $} \\ \hline
$\mathcal{N}_{op}\left( k|S_{3}\right) $ & $ $ & $3$ & $3$ & $3$ & $4$ & $4$
& $4$ & $5$ & $5$ & $5$ & \multicolumn{1}{c|}{$\cdots $} & 
\multicolumn{1}{c|}{$39$} & \multicolumn{1}{c|}{$\cdots $} \\
\hline
\end{tabular}%
\caption{\small{Number of operations for calculating $\mathbb{M}^{k}$ 
with $S_2$ and $S_3$ storages.}}
\label{c2}
\end{table}

In general, for any number of stored matrix products $x_{s}$ the number 
of multiplications to construct $\mathbb{M}^{k}$ is 
\begin{equation}
\mathcal{N}_{op}\left( k|S_{x_{s}}\right) =\left( x_{s}-1\right) +\left[ 
\frac{k-1}{x_{s}}\right] ,  \label{c5}
\end{equation}
where the term in parenthesis stands for the number of multiplication
necessary to construct the set $S_{x_{s}}$. For each $k$ we can find 
the optimum number of products to be stored in order to minimize the 
number of matrix multiplications by looking for the point of minimum 
of Eq. (\ref{c5}) 
\begin{equation}
\frac{d\mathcal{N}_{op}\left( k|S_{x_{s}}\right) }{dx_{s}}=1-\left[ \frac{k-1%
}{x_{s}^{2}}\right] =0\Longrightarrow \left[ x_{s,\min }\right]:=\bar{k} =\left[ 
\sqrt{k-1}\right]   \label{c6}
\end{equation}
and 
\begin{equation}
\mathcal{N}_{op}\left( \bar{k}\right) :=\mathcal{N}_{op}\left( k|S_{\bar{k}} \right) 
= \left[ 2\sqrt{k-1}-1\right] \ ,  \label{c7}
\end{equation}
where $\bar{k} $ is the minimum number of powers of $\mathbb{M}$
necessary to construct the set $S_{\bar{k} }=\left\{ 
\mathbb{M},\mathbb{M}^{2},...,\mathbb{M}^{\bar{k}}\right\} $. 
This is exemplified in Table \ref{c8}. 
\begin{table}[htbp]
\begin{tabular}{|c|l|l|l|l|l|l|l|l|l|l|l|l|l|}
\hline
$k$ & $2$ & $3$ & $4$ & $5$ & $6$ & $7$ & $8$ & $9$ & $10$ & $11$ & $12$ & $%
\cdots $ & $113$ \\ \hline  &  &  &  &  &  &  &  & 
&  &  &  & $  $ & \\
$\bar{k} $ & $1$ & $1$ & $2$ & $2$ & $2$ & $2$ & $3$ & $3$
& $3$ & $3$ & $3$ & $ \cdots $ & $11$ \\ \hline
$\mathcal{N}_{op}\left( \bar{k}\right) $ & $1$ & $2$ & $2$ & $3$ & $3$ & $4$
& $4$ & $5$ & $5$ & $5$ & $6$ & $ \cdots $ & $20$ \\ \hline
\end{tabular}%
\caption{\small{Minimizing the number of operations for calculating $\mathbb{M}^{k}$ 
with storage $S_{\bar{k}}$ with $\bar{k} = \left[  \sqrt{k-1}\right]$.}}
\label{c8}
\end{table}
\begin{figure}[htbp]
\centering
\includegraphics[height=2.3in, width=3.5in]{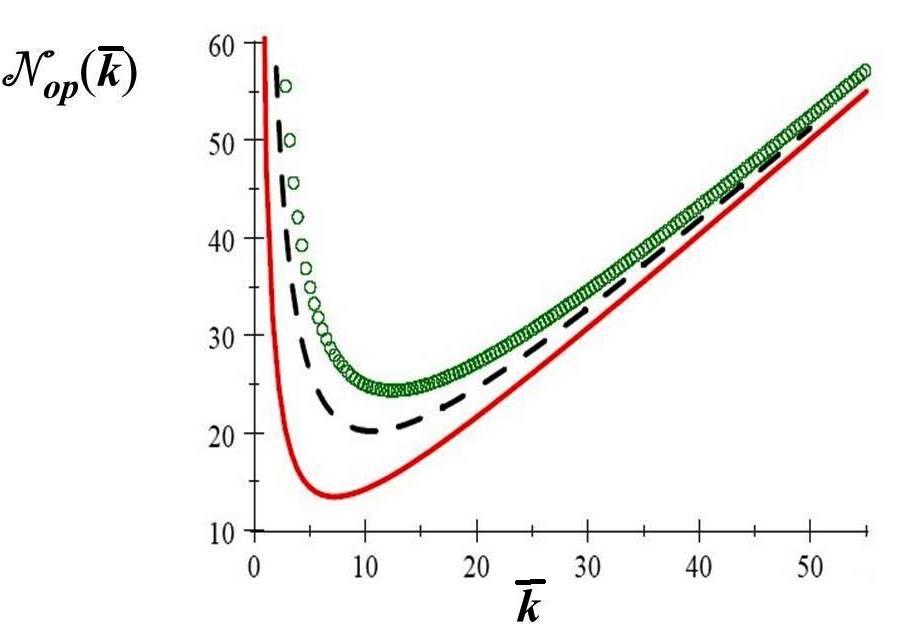}
\caption{\small {Eq. (\ref{c5}) for $k=53,113,159$ (red
solid line, black dashed and circles in green)}}
\label{fig01}
\end{figure}
For $k\gg 1$ we have $\bar{k} \approx \sqrt{k}$ and $%
\mathcal{N}_{op }\left( k\right) \approx 2\sqrt{k}$. In Fig. \ref{fig01}
we plot the graphs for $k=53,113,159$ and we find $\bar{k}
=7,10,12$, that imply the following lowest numbers of necessary operations 
$\mathcal{N}_{op}\left( \bar{k}\right) \approx 13,20,24$.

In Fig. \ref{fig02} we plot the number of $k$ operations Alice has to
perform without (dashed line) and with (solid line) storage. Indeed, the
result, that we consider being only an upperbound, is quite remarkable in
terms of reducing the number of operations. Nonetheless, we did not considered here the
necessary computational time for storing and calling back a matrix. 
\begin{figure}[tbp]
\centering
\includegraphics[height=2.2in, width=3.3in]{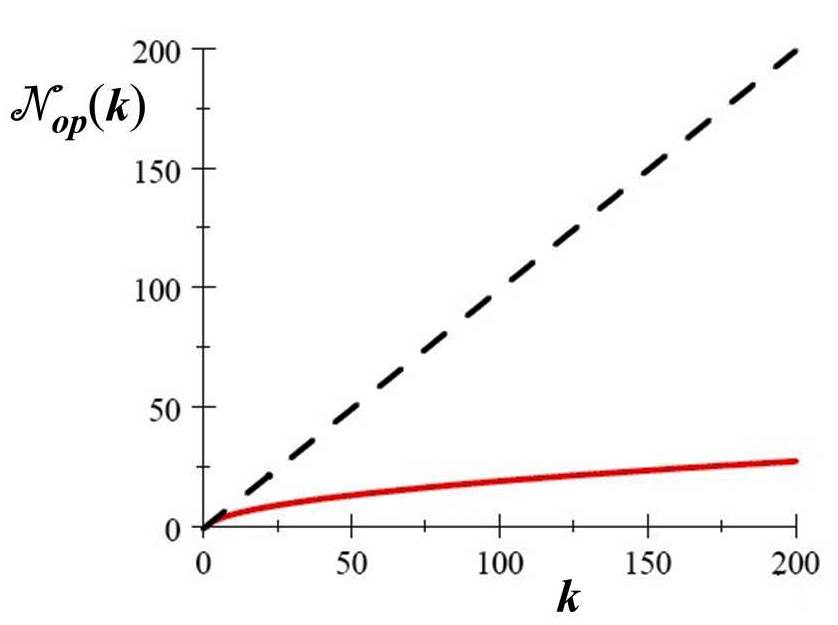}
\caption{{\small {The solid line stands for the number of operations
with storage ($\left[ 2\sqrt{k-1}-1\right]$) versus the number of operations
without storage $\left( k-1\right) $, dashed line.}}}
\label{fig02}
\end{figure}
%
%%%%%%%%%%%%%%%%%%%%%%%%%%%%%%%%%%%%%%%%%%%%%%%%%%%%%%%%%%%%%%%%%%%%%%
\section{Some strategies of attack that a cryptanalyst may choose to 
recover $\mathbb{M}$}
%%%%%%%%%%%%%%%%%%%%%%%%%%%%%%%%%%%%%%%%%%%%%%%%%%%%%%%%%%%%%%%%%%%%
%
A pertinent question that can be asked is: how a cryptanalyst could get the matrix 
$\mathbb{M}$ by only knowing the matrices $\mathbb{C}$ and $\mathbb{D}$? To answer this 
question, some considerations are advanced, as discussed in the Algorithm \ref{alg1}. 
Below we do not consider the extension of the protocol as proposed in subsection 
\ref{disguising}:
\begin{enumerate}[label=(\alph*)]
\item Eve has the matrices $\mathbb{C}$ and $\mathbb{D}$ of Example \ref{ex1},
for instance, but she is in the dark about their meaning, or whether they
contain any message. In this case she has no clue to follow in order to identify
some pattern, the only obvious property is that they commute. She shall try
to find some kind of correlation between the entries of the matrices,
between its inverses, or between the eigenvalues and eigenvectors. Otherwise
she waits for another two matrices, $\mathbb{C}^{\prime }$ and $\mathbb{D}%
^{\prime }$, to be conveyed by Alice and then tries to find some pattern.

\item Eve have access to the matrices $\mathbb{C}$ and $\mathbb{D}$, she knows
that they originate from a single matrix $\mathbb{M}$, of dimension $N$, 
raised to the powers $k_{1}$ and $k_{2}$ that may be the carrier of some message, 
but $k_{1}$ and $k_{2}$ are unknown to her. She can embrace the strategy of 
the direct approach by diagonalizing the matrices, such to get access to the 
eigenvalues and eigenvectors. The eigenvalues $\left\{ \lambda _{i}^{C}\right\} $ 
and $ \left\{ \lambda _{i}^{D}\right\} $ ($i=1,...,N$) are related to the 
eigenvalues of matrix $\mathbb{M}$, $\left\{ \lambda _{i}^{M}\right\} $, namely 
$\lambda _{i}^{C}=\left( \lambda _{i}^{M}\right) ^{k_{1}}$ and $\lambda
_{i}^{D}=\left( \lambda _{i}^{M}\right) ^{k_{2}}$, then Eve should try 
many integer values $p$ and $t$ to solve $\left( \lambda
_{i}^{C}\right) ^{1/p}$ and $\left( \lambda _{^{i}}^{D}\right) ^{1/t}
$ ($p,t=K,K+1,...,K+L$, where $K$ and $L$ are integer numbers chosen
arbitrarily by Eve) in order to find a common eigenvalue. However, Eve have
to deal with the multiplicity of roots, $\mu _{i,r}^{ C}$ and 
$\mu _{j,s}^{ D }$ with $i,j$ standing for the eigenvalues and $r=1,...,p$ 
and $s=1,...,t$ are the indices that classify the multiplicity of the roots 
for each eigenvalue. The number of eigenvalues to be analyzed is 
$N^2\times \left( L-K+1\right)^2$, which is time and, as well as, memory 
consuming, compared to Bob's task to get $\mathbb{M}$ once he receives 
the numbers $k_{1}$ and $k_{2}$.

\item Eve knows that each matrix, $\mathbb{C}$ and $\mathbb{D}$, originates
from the exponents $k_{1}$ and $k_{2}$ of a single matrix $\mathbb{M}$ that is
the carrier of a message, but $k_{1}$ and $k_{2}$ are unknown to her. Then
another strategy consists in solving the equations $\mathbb{C}-\mathbb{X}%
_{C}^{l_{C}}=0$ and $\mathbb{D}-\mathbb{X}_{D}^{l_{D}}=0$, where many
positive integers $l_{C}$ and $l_{D}$ should be tried for the unkown matrices
$\mathbb{X}_{C}$ and $\mathbb{X}_{D}$. If the matrices $%
\mathbb{C}$ and $\mathbb{D}$ belong to $C^{N\times N}$, then $2N^{2}$ (the
factor $2$ is present because the entries belong to the field of complex
numbers) polynomial algebraic equations, having degrees $l_{C}$ and $l_{D}$ 
respectively, must be solved. Each one of them has the same
number, $2N^{2}$, of unknown variables. Each matrix equation
accepts $L_{i}$ ($i=C,D$) different sets of solutions $\left\{ \mathbb{\bar{X}}%
_{C,r}\right\} $ and $\left\{ \mathbb{\bar{X}}_{D,r}\right\} $, $%
r=1,...,L_{i}$ if $\mathbb{C}$ and $\mathbb{D}$ do not show any kind of
symmetry or constraints that could reduce the number of independent
equations. For each pair $\left( l_{C},l_{D}\right) $ Eve has to verify
whether one solution, of the $2N^{2}$ possible ones, of each set of equations is
common to both; if that happens then Eve has successfully 
uncovered the matrix $\mathbb{\bar{M}}$, once she concluded that 
$\left( \bar{l}_{C},\bar{l}_{D}\right) =\left( k_{1},k_{2}\right) $, such 
that $\mathbb{C=M}^{k_{1}}$ and $\mathbb{D=M}^{k_{2}}$. Worth to mention
that Bob can use that very same method to get the right matrix because he
knows beforehand the values of $k_{1}$ and $k_{2}$, and he does not need, as Eve 
needs, to browse through many pairs of numbers to be used as exponents.
However, this approach is more time consuming than the proposition of our
algorithm needs and depends on high computational capacity. 

\item Eve knows that $\mathbb{C}=\mathbb{M}^{k_{1}}$ and $\mathbb{D}=\mathbb{M}%
^{k_{2}}$ and she is acquainted with the algorithm, but she ignores the values 
of the exponents $k_{1}$ and $k_{2}$ $\left( k_{1}>k_{2}\right)$. Knowing the
algorithm she decides to calculate the product of matrices 
$\mathbb{C}\left( \mathbb{D}^{-1}\right) ^{m_{1}}\equiv \mathbb{M}_{k_{3}}
\left( m_{1}\right) $, and as she doesn't know which is the right exponent she 
tries the values $m_{1}=1,...,L$. She repeats the process described in Eq. 
(\ref{B5}) sequentially for each value $m_{i}$, as it is explicitly presented 
in Table \ref{t1}.  
\begin{table}[htbp]
\begin{tabular}{|c|c|c|}
\hline
{\small Computing a sequence of matrices} & \# of {\small matrices} & 
{\small \# of matrix} {\small mult.} \\ \hline \hline
\multicolumn{1}{|l|}{$\mathbb{M}_{k_{3}}\left( m_{1}\right) \equiv \mathbb{C}%
\left( \mathbb{D}^{-1}\right) ^{m_{1}}$} & $L$ & \multicolumn{1}{l|}{$%
\sum_{m_{1}=1}^{L}m_{1}$} \\ \hline
\multicolumn{1}{|l|}{$\mathbb{M}_{k_{4}}\left( m_{1},m_{2}\right) \equiv 
\mathbb{D}\left( \left( \mathbb{M}_{k_{3}}\left( m_{1}\right) \right)
^{-1}\right) ^{m_{2}}$} & $L^{2}$ & \multicolumn{1}{l|}{$\left(
\sum_{m_{2}=1}^{L}m_{1}\right) ^{2}$} \\ \hline
\multicolumn{1}{|l|}{$\mathbb{M}_{k_{5}}\left( m_{1},m_{2},m_{3}\right)
\equiv \mathbb{M}_{k_{3}}\left( m_{1}\right) \left( \left( \mathbb{M}%
_{k_{4}}\left( m_{1},m_{2}\right) \right) ^{-1}\right) ^{m_{3}}$} & $L^{3}$
& \multicolumn{1}{l|}{$\left( \sum_{m_{1}=1}^{L}m_{1}\right) ^{3}$} \\ 
\hline
$\cdots $ & $\cdots $ & \multicolumn{1}{l|}{$\cdots $} \\ \hline
\multicolumn{1}{|l|}{$\mathbb{M}_{k_{r+1}}\left( m_{1},...,m_{r-1}\right)
\equiv $} &  & \multicolumn{1}{l|}{} \\ 
$\mathbb{M}_{k_{r-1}}\left( m_{1},...,m_{r-3}\right) \left( \left( \mathbb{M}%
_{k_{r}}\left( m_{1},...,m_{r-2}\right) \right) ^{-1}\right) ^{m_{r-1}}$ & $%
L^{r-1}$ & $\left( \sum_{m_{1}=1}^{L}m_{1}\right) ^{r-1}$ \\ \hline
\end{tabular}
\caption{\small{Sequence of matrices to be computed, the second column shows the number of 
matrices and the third column stands for the number of matrix multiplications.}}
\label{t1}
\end{table}
She then computes and analyze the matrices, stopping at some value $r>2$ when she 
succeeds in finding the seed matrix in the last group of $L^{r-1}$, after 
having checked $\sum_{l=1}^{r-1}L^{l}=\left( L\left( L^{r-1}-1\right) \right) /\left(
L-1\right) $ matrices. The analysis: Eve considers the matrices $\mathbb{M}_{k_{l}}\left(
m_{1},...,m_{l-2}\right) $, with $l=3,...,r+1$, and raises each one to
powers that begin at some positive integer $K_{1}$ and stops at a value $%
K_{2}$, $k=K_{1},K_{1}+1,..,K_{2}$, chosen arbitrarily ($\left( \mathbb{M}%
_{k_{l}}\left( m_{1},...,m_{l-2}\right) \right) ^{k}$). She then begins to quest 
whether an obtained matrix coincides with matrix $\mathbb{C}$ or
with $\mathbb{D}$. In the case she succeeds, she then acknowledges that she 
got the seed matrix $\mathbb{M}$. However, this procedure is of exponential 
complexity since for $L=9$ and $r=10$ she has to analyze $435,848,049\approx 
4.4\times 10^{8}$ matrices. For the same $L$ and for $r=15$ the number of 
matrices goes to $25,736,391,511,830\approx 2.6\times 10^{13}$; this number 
grows exponentially with $r$. Besides, if she is sure that she did not find the 
meaningful matrix, she can try another possibility, namely, she changes 
$\mathbb{C\rightarrow D}^{-1}$ and $\mathbb{D}^{-1}\rightarrow 
\mathbb{C}$ and repeats the same procedure. 

We can now analyze this procedure in terms of probabilities as a function of
time. What is the probability that Eve achieves the identification of the right 
matrix $\mathbb{M}$ at the $s$-th attempt? i.e., the first $s-1$ computed matrices 
are the wrong ones. If she assumes that $\mathbb{M}$ is one out of the $%
\mathcal{N}=\left( L\left( L^{r-1}-1\right) \right) /\left( L-1\right) $
matrices, and all being equally likely, then she attributes the success 
probability $p=\mathcal{N}^{-1}$. The probability to find $\mathbb{M}$ 
at the $s$-th trial is $q^{s-1}p$ with $q=\left( 1-p\right) <1$, for 
the probability of not being the right matrix. Normalizing, the probability 
is given as $P_{s}=q^{s-1}p/\left( 1-q^{\mathcal{N}}\right )$, such that 
$\sum_{s=1}^{\mathcal{N}}P_{s}=1 $. The estimated average time for Eve to 
discover the right matrix then is 
\begin{equation}
T\left(\mathcal{N} \right) =\tau _{0}\frac{\sum_{s=1}^{\mathcal{N}}sP_{s}}{%
\sum_{s=1}^{\mathcal{N}}P_{s}}\approx \tau _{0}\mathcal{N}\ ,
\label{meantime}
\end{equation}
a time that grows exponentially with $r$, where $\tau_{0}$ is an arbitrary 
unit of time. We do not rule out the possibility of creation of 
ingenuous strategies, more efficient than those we proposed here, to extract 
the seed matrix $\mathbb{M}$ from matrices $\mathbb{C}$ and $\mathbb{D}$, in a 
comparatively quite shorter time. 
\end{enumerate}
If Alice extends her protocol by blurring the matrices $\mathbb{C}$ and 
$\mathbb{D}$ as proposed in subsection \ref{disguising}, we have not been 
able to devise a numerical procedure for a cryptanalyst to extract matrices 
$\mathbb{Z}$ from matrices $\bar{\mathbb{C}}$ and $\bar{\mathbb{D}}$.
%=\tau _{0}\left(L\left( L^{r-1}-1\right) \right) /\left( L-1\right) ,
%
%%%%%%%%%%%%%%%%%%%%%%%%%%%%%%%%%%%%%%%%%%%%%%%%
\section{Public key distribution}
%%%%%%%%%%%%%%%%%%%%%%%%%%%%%%%%%%%%%%%%%%%%%%%%%%%%%%
%
Here we present two methods for the public key distribution, whose discussion is 
included because the encryption and then decryption are essential for the 
distribution of the 6-tuple key $K$, as discussed in subsection \ref{disguising}.
%
%%%%%%%%%%%%%%%%%%%%%%%%%%%%%%%%%%%%%%%%%%%%%%%%%%%%%%%%%%
\subsection{Diffie-Hellman proposal}
%%%%%%%%%%%%%%%%%%%%%%%%%%%%%%%%%%%%%%%%%%%%%%%%%%%%%%%555
%
In Ref. \cite{diffie}, W. Diffie and M. E. Hellman (DF) reported a method by which 
two, or even $n$, persons could exchange information through a secure connection 
created over a public channel and using a publicly divulged method only. In fact, 
it seems that the idea was already proposed and used by the British intelligence 
agencies, but it was not publicized \cite{singh}. The proposal was later improved and 
turned operational by R. L. Rivest, A. Shamir and L. Adleman, as reported in Ref. 
\cite{RSA}, after getting a patent for the method. As recognized by these authors, 
\aspe Their [DH] article motivated our research, since they presented the concept 
but not any practical implementation of such a system\aspd. Indeed, although the 
RSA method was a new encryption approach, the principle of the public key 
distribution was first publicized by Diffie and Hellman for civilian use purposes, 
although it was recently reported that it fails in practice \cite{adrian}. With 
this in mind we present below the principle of a public key distribution, each 
part (person) conserving a private or secret key, and how a confidential plaintext 
can be shared by the parts. This procedure can be used in our proposal of encrypting 
a matrix $\mathbb{M}$, containing a plaintext, by raising it to two powers, 
$k_1$ and $k_2$, and complementing it with the protocol exposed in subsection 
\ref{disguising}. The calculated matrices are the recipient of an  
encrypted publicly exposed message, and the 6-tuple key $K$ is essential for the 
encryption and decryption, so to get the seed matrix $\mathbb{M}$. As Alice has a 
message to be conveyed to Bob, she looks for a secure channel to send him the key $K$
and, for instance, they decide to use the DH method: 

The positive integer $K$ can be defined as $K:=l\modu p$, where $l$ and $p$ are also 
positive integers. The number $l$ is expressed as $q^{ab}$ where $q$, $a$ and $b$ 
are positive integers too, thus,
\be
K=q^{ab}\modu p=\left( q^{a}\right) ^{b}\modu p=\left( q^{b}\right)
^{a}\modu p\ . 
\ee
Alice and Bob adopt the pair of numbers ($p,q$) as a public key and Bob creates 
his own secret key $b$ and composes a number $k_{b}=q^{b}\modu p$ that he exposes 
publicly, although aimed to Alice. As she receives $k_{b}$ she chooses a number 
$a$ and calculates $\left( k_{b} \right)^{a} \modu p = K$. If she judges that 
this number $K$ is convenient she keeps $a$ as her secret key, otherwise she 
goes after another one. Once Alice chose a number $a$ this means that she decided 
what should be the common key $K$, then she calculates $k_{a}=q^{a}\modu p$ 
and sends it to Bob, without being concerned about its secrecy. 
Bob reads it and calculates $\left( k_{a}\right) ^{b}\modu p = K$, 
which is the symmetric operation done by Alice. Now Alice and Bob share 
a same number $K$. 
\begin{expl}\label{ex7} Let us admit that Alice and Bob want to share more than 
one key, by splitting $K$ into $K_1$ and $K_2$, that complement each other. 
The knowledge of $K_1$ with no knowledge of $K_2$ (and vice versa) is sterile. 
\newline
Alice (or Bob) produces a public key, for instance, 
$(p,q) = \left( 457,7832\right)$, then Bob chooses his secret key, for 
instance $b_{1}=2$, and calculates the number $k_{b_{1}}:=(7832)^{2}\modu 457 
= 313$ that he sends to Alice. As she reads $k_{b_{1}}$ she has to choose $%
a_{1}$  judiciously (in line with the strategies presented in section 
\ref{strategy}, for instance) such that $K_{1}$ becomes a common key to be 
shared with Bob. She considers $k_{b_{1}}$ and $p=457$ (both being prime numbers)
and calculates $K_{1}=\left( k_{b_{1}}\right) ^{a_{1}}\modu p$ and tries several 
values of $a_{1}=x$ in $K_{x}=\left( \left( 313^{x}\right) \modu 457\right)$, as
candidates for her secret key, and then decides to pick $a_{1}=11$ that gives 
$K_{1}=361$. She now calculates $k_{a_{1}}=q^{a_{1}}\modu p$ 
($k_{a_{1}}=\left( 7832\right) ^{11}\modu 457=438$) and keeps it. Alice sends 
then to Bob the second part of the key, $K_{2}$, and she chooses $a_{1}=7$; thus 
$ K_{2}=\left( \left( 313^{7}\right) \modu 457\right) =79$. She calculates 
the matrices $\bar{\mathbb{C}}$ and $\bar{\mathbb{D}}$ that she divulges 
publicly, but aimed primarily to Bob. In sequence she divulges publicly the numbers 
$k_{a_{1}}$ and $k_{a_{2}}=\left( 7832\right) ^{7}\modu 457=203$. 
As Bob reads them he computes $K_{1} = \left( k_{a_{1}}\right) ^{b_{1}}\modu p =
\left( 438\right) ^{2}\modu 457 = 361$ and $K_{2} =\left( 203\right) ^{2}\modu 457=79$, 
and he eventually shares with Alice the keys $K_{1}$ and $K_{2}$. He then effectuates 
the reversal operation discussed in subsection \ref{disguising} and then goes on with 
the Algorithm\ref{alg1} to retrieve the matrix $\mathbb{M}$ from the matrices 
$\mathbb{C}$ and $\mathbb{D}$. See the schematization of the key distributions 
in Fig \ref{keydist}.
\end{expl}
\begin{figure}[htbp]
\centering
\includegraphics[height=3.5 in, width=5.0 in]{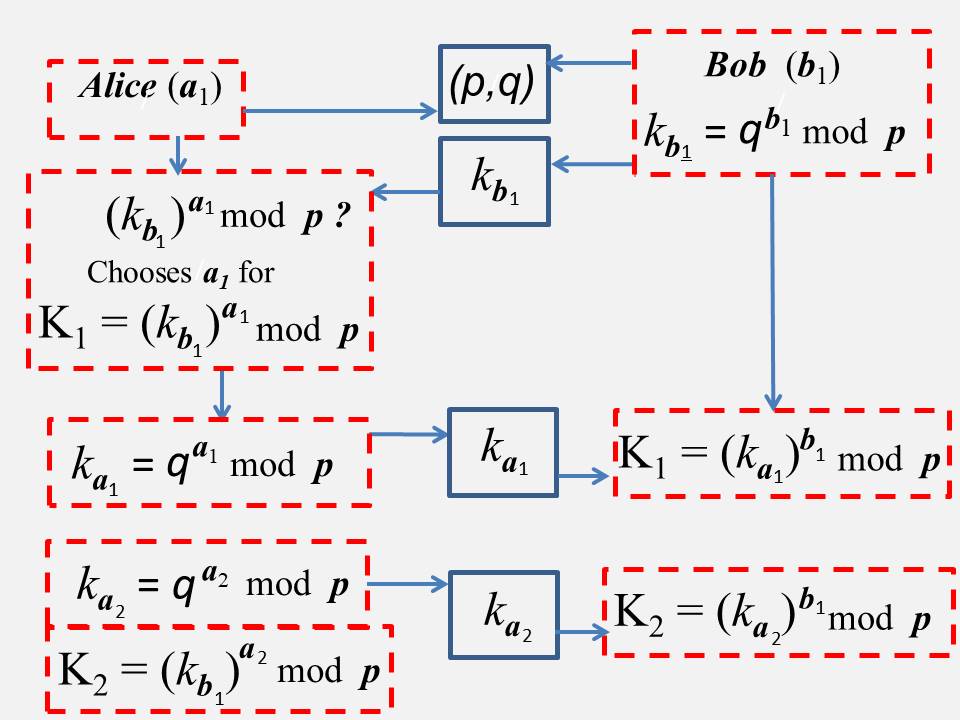}
\caption{{\small{Schematic picture of the key distribution between 
Alice and Bob. The solid line boxes represent the public codes and 
dashed boxes are for the personal calculations and the secret keys.}}}
\label{keydist}
\end{figure}
%
%%%%%%%%%%%%%%%%%%%%%%%%%%%%%%%%%%%%%%%%%%%%%%%%%%%%%%%%%%%%%
\subsubsection{Generalizing two by two secret keys for $N$ persons}
%%%%%%%%%%%%%%%%%%%%%%%%%%%%%%%%%%%%%%%%%%%%%%%%%%%%%%%%%%%%%%%
%
The Diffie-Hellman method is not restricted only to two persons (Alice and Bob), 
it can be generalized to involve $N$ persons, sharing, two by two, 
bilateral secret keys, $K_{i,j}$, $i,j=1,2,...N$, with symmetry $%
K_{i,j}=K_{j,i}$ and $K_{i,i}=0$, that makes $N\left( N-1\right) /2$
communication channels. The same public key ($p,q$) can be used by every person $P_i$
although he could choose different secret keys to be used for each recipient
of the message. Formalizing, we have
\be
K_{i,j}=q^{a_{i}a_{j}}\modu p=\left( q^{a_{i}}\right) ^{a_{j}}\modu%
p=\left( q^{a_{j}}\right) ^{a_{i}}\modu p\ . 
\ee
P$_{i}$ and P$_{j}$ share the public key $(p,q)$ used for producing their 
respective secret keys $a_{i}$ and $a_{j}$. P$_{i}$ composes the number 
$k_{a_{i}}=q^{a_{i}}\modu p$ that he sends publicly to P$_{j}$ that uses its 
own private key to calculate  
$\left( k_{a_{i}}\right) ^{b_{j}}\modu p=K_{i\longrightarrow j}:=K_{i,j}\ $. 
By its turn, using his secret key, P$_{j}$ composes a number $k_{a_{j}}=q^{a_{j}}\modu p$ 
and sends it to P$_{i}$ that does the symmetric operation by calculating 
$\left( k_{a_{j}}\right) ^{a_{i}}\modu p=K_{j\longrightarrow i}:=K_{j,i}$. 
As $\left( k_{a_{i}}\right) ^{a_{j}}\modu p=\left( k_{a_{j}}\right)
^{a_{i}}\modu p$, therefore P$_{i}$ and P$_{j}$ share the same number $%
K_{j,i}=K_{i,j}$.
%
%%%%%%%%%%%%%%%%%%%%%%%%%%%%%%%%%%%%%%%%%%%%%%%%%%%%%%%%%%%%%%%
\subsection{RSA method and Algorithm \ref{alg1}}
%%%%%%%%%%%%%%%%%%%%%%%%%%%%%%%%%%%%%%%%%%%%%%%%%%%%%%%%%55555
%
The method of encrypting a message invented by Rivest, Shamir and Adleman \cite{RSA}
has its security based, in part, on the difficulty of factoring large numbers, 
dispensing the use of a courier to carry secret keys. We present the RSA method 
aimed to transmit, from Alice to Bob, the numbers $K_1$ and $K_2$, that she used 
as exponents to rise the matrix $\mathbb{M}$ that contains a message. Bob is the character 
responsible for the construction of the keys to be used for the encryption 
and the decryption of $K_1$ and $K_2$.  

Bob chooses two ``many digits'' prime numbers, $p$ and $q$, and computes their 
product $n=pq$ ($n$ is called biprime number), he then considers the 
\emph{Euler totient} function $\varphi (n)$ (that gives the number of positive 
integers relatively prime to but less than $n$); in this case 
$\varphi (n)= \varphi (p)\varphi (q) = (p-1)(q-1)$, see \cite{cohn} for instance. 

Then Bob chooses an integer number $d$ that is relatively prime to $\varphi (n)$, 
$\gcd\left(d,\varphi (n)\right) = 1$, and also to $n$, $\gcd\left(d,n)\right) = 1$. 
Thus $d$ has an inverse modulo $\varphi (n)$, $[d^{-1}]:= e$, such that 
$ e\cdot d \equiv 1 (\modu \varphi (n))$, or $e\equiv d^{-1} (\modu \varphi (n))$, 
and $d \equiv e^{-1} (\modu \varphi (n))$. The r\^{o}le of $e$ within the 
method is to encrypt the message $M$ (shuffling the digits). Thereafter Bob makes 
the pair of numbers $(e,n)$ public. 
%Symbolically, 
%the method can be expressed as $C=E(M)$, ($M$ for message, $E$ for encryption and 
%$C$ for the encrypted message) and $D$ represents an inverse operation to decrypt  
%$C$, $D(C)=M$.

After reading $(e,n)$ Alice encrypts her message $M$ as $C =  M^{e}\modu n$ 
and makes it public, then Bob uses his secret decryption key 
$C^{d} \modu n = M$ to retrieve $M$ from $C$. The RSA method relies essentially 
on the difficulty that a cryptanalyst will have to decompose a large -- some 200 
digits -- biprime number $n$ into its two factors, as long as he does not know 
$d$. Below we illustrate the RSA method to be employed in conjunction with the 
reversal operation discussed in subsection \ref{disguising} 
and the Algorithm \ref{alg1}.
\begin{expl}\label{ex8}
Bob chooses the numbers $p=37$ and $q=53$, such that $n=p\cdot q = 1961$, 
$\varphi (n) =(q-1)(p-1)=1872 $ and turns $n$ public. He then picks $d=163$ 
as his secret decryption key, checks that $\gcd\left( \varphi (n),d \right) = 1$ 
is satisfied and calculates the encryption key $e =d\modu \varphi (n)=379$ that 
he turns public. Alice wants to send to Bob $K_1 = 361$ and $K_2 = 079$ 
without worrying whether they will be intercepted by an eavesdropper. 
Once she has free access to $e$ and $n$ she encrypts them as 
$ Y_1 = ((K_1)^{e}\modu n) = 324$ and $ Y_2 = ((K_2)^{e}\modu n) 
= 1253$, and makes these numbers public together with the matrices 
$\bar{\mathbb{C}}$ and $\bar{\mathbb{D}}$.

In order to get a swift access to the matrix $\mathbb{M}$ using the reversal 
operation in subsection \ref{disguising} and the Algorithm \ref{alg1}, Bob uses 
his secret key $d$ to get $K_1 = (Y_1)^{d}\modu n = 361$ and $K_2 = (Y_2)^{d}\modu 
n = 079$. In this way Alice was able to send to Bob the necessary keys  
that he uses to get access to matrix $\mathbb{M}$. It is worth noting that, 
otherwise, Alice could have sent a single number $361079$ instead of two if 
there was an understanding that the first half of the 
digits (beginning from the left) corresponds to $K_1=361$ and the other half 
to $K_2=079$.
\end{expl}
%
%%%%%%%%%%%%%%%%%%%%%%%%%%%%%%%%%%%%%%%%%%%%%%%%%%%%%%
\section{Summary and conclusions}
%%%%%%%%%%%%%%%%%%%%%%%%%%%%%%%%%%%%%%%%%%%%%%%%%%%%
%
We have here proposed a method to encrypt a matrix $\mathbb{M}$ and to decrypt it 
that is based on the calculation consisting in raising $\mathbb{M}$ to powers 
whose exponents are positive coprime numbers $k_1$ and $k_2$, producing the 
matrices $\mathbb{C}$ and $\mathbb{D}$, respectively. The knowledge of these 
numbers, that can be be considered as a key, makes the decryption 
(extracting $\mathbb{M}$ from $\mathbb{C}$ and $\mathbb{D}$) quite immediate 
in terms of a number of a certain number of operations, according to the Algorithm \ref{alg1}. 
In the case the key is unknown, we estimated that a third party could spend 
a much larger time to get $\mathbb{M}$ by the method of trial and error. 
Our algorithm follows the standard approach to cryptography: easy to encrypt 
and also to decrypt to those that have the key, but quite hard to a third 
party that has to guess it. The keys for the decryption, $k_1$ and $k_2$ 
can be shared between the parts by using either the DH method or the RSA, 
as described above.

Our proposal for encrypting a message is one possible application of the
discussed algorithm, nevertheless we believe that it can be useful in 
other instances. For example, in retrodicting a seed matrix $\mathbb{M}$ 
in a discrete $n-$step Markov process. If one only knows two matrices 
$\mathbb{M}_{k_{1}}$ and $\mathbb{M}_{k_{2}}$ (entries belonging to 
$\mathbb{R}$ in the closed interval $\left[0,1\right] $) describing the 
evolution at two times (known), $k_{1}$ and $k_{2} $ ($k_{1}>k_{2}$), 
once $\mathbb{M}_{1}$ is retrieved we are able to calculate the powers 
$\mathbb{M}^{k}$ for any integer $k\in \left[ 1,\infty\right) $, being 
thus able to run all the history of the changes of the matrix, so 
establishing a kind of determinism, although the entries are
conditional probabilities. 
%
%%%%%%%%%%%%%%%%%%%%%%%%%%%%%%%%%%%%%%%%%%%%%%%%%%%%%%%%%%%%%%
\begin{appendix}
%%%%%%%%%%%%%%%%%%%%%%%%%%%%%%%%%%%%%%%%%%%%%%5
%
%%%%%%%%%%%%%%%%%%%%%%%%%%%%%%%%%%%%%%%%%%%%%%%%%%%%%%%%
\section{Lemma and Theorem \label{theor}}
%%%%%%%%%%%%%%%%%%%%%%%%%%%%%%%%%%%%%%%%%%%%%%%%%
%
\begin{lemma}
\label{lema1}%
Euclid's algorithm for coprime numbers: given two positive integers, $k_{1}$
and $k_{2}$, with $k_{1}>k_{2}$, being coprime, ($\gcd \left( k_{1},k_{2}\right)
=1$), in modular arithmetic we have $k_{1}\modu\ k_{2}= k_{3}$, or
\begin{equation}
k_{1}=q_{1}k_{2}+k_{3},   \label{B1}
\end{equation}
$q_{1}$ being the quotient and $k_{3} \left(< k_{2} \right)$ is the remainder, 
then it follows that the numbers $k_{2}$ and $k_{3}$ are also coprime.
\end{lemma}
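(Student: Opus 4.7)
The plan is to argue directly from the defining equation $k_{1}=q_{1}k_{2}+k_{3}$ that every common divisor of $k_{2}$ and $k_{3}$ is also a common divisor of $k_{1}$ and $k_{2}$, and conversely. Since $\gcd(k_{1},k_{2})=1$ is given, the only common divisor on the left-hand pair is $1$, which then forces $\gcd(k_{2},k_{3})=1$.

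More concretely, I would proceed by contradiction. Assume $d:=\gcd(k_{2},k_{3})>1$. Then $d\mid k_{2}$ and $d\mid k_{3}$, so $d$ divides any integer linear combination of $k_{2}$ and $k_{3}$; in particular $d\mid(q_{1}k_{2}+k_{3})=k_{1}$. Hence $d$ is a common divisor of $k_{1}$ and $k_{2}$, which contradicts $\gcd(k_{1},k_{2})=1$. Therefore $\gcd(k_{2},k_{3})=1$, i.e.\ $k_{2}$ and $k_{3}$ are coprime. One should also remark that $k_{3}\neq 0$: if it were, then $k_{2}\mid k_{1}$ and coprimality would force $k_{2}=1$, which contradicts $k_{1}>k_{2}$ being a nontrivial pair (or simply makes the statement vacuous, since then there is no ``next'' remainder to examine).

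The argument is completely symmetric: conversely any common divisor of $k_{1}$ and $k_{2}$ divides $k_{3}=k_{1}-q_{1}k_{2}$, so the two sets of common divisors actually coincide and $\gcd(k_{1},k_{2})=\gcd(k_{2},k_{3})$. Although the lemma only asks for one direction, phrasing the proof this way makes it transparent that the whole sequence $(k_{1},k_{2}),(k_{2},k_{3}),(k_{3},k_{4}),\ldots$ that appears in the cascade of equations (\ref{B4}) consists of pairs with the same gcd, which is exactly the property invoked repeatedly in Algorithm \ref{alg1}.

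There is no real obstacle here: the only subtlety worth a line of text is to make sure $k_{3}$ is a \emph{positive} integer strictly less than $k_{2}$ (which is exactly the content of the modular reduction $k_{1}\modu k_{2}=k_{3}$), so that the induction underlying Euclid's algorithm terminates and the successive pairs $(k_{i},k_{i+1})$ remain legitimate coprime pairs to feed into the next step of the algorithm.
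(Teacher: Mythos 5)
Your proof is correct and follows essentially the same route as the paper's: assume a common divisor $b>1$ (your $d$) of $k_{2}$ and $k_{3}$, observe it divides $k_{1}=q_{1}k_{2}+k_{3}$, and contradict $\gcd(k_{1},k_{2})=1$. The extra remarks you add (the converse inclusion showing $\gcd(k_{1},k_{2})=\gcd(k_{2},k_{3})$, and the positivity of $k_{3}$ ensuring termination) are sound refinements but do not change the underlying argument.
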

\begin{proof}
If we assume that $k_{2}$ and $k_{3}$ are not coprime, then they share a
common factor $b>1$, and we can write $k_{2}=bk_{2}^{\prime }$ and $%
k_{3}=bk_{3}^{\prime }$, where now
\begin{equation}
k_{1}=b\left( q_{1}k_{2}^{\prime }+k_{3}^{\prime }\right) \equiv
bk_{1}^{\prime } \ , \label{B2}
\end{equation}
therefore one gets for the pair of numbers $\left( k_{1},k_{2}\right) = 
\left( bk_{1}^{\prime },bk_{2}^{\prime }\right)$, 
thus $k_{1}$ and $k_{2}$ are not coprime, which contradicts our
initial assumption, therefore $k_{2}$ and $k_{3}$ are necessarily coprime, 
i.e. $b = 1$.
\end{proof}
Here we give a proof of the uniqueness of the sequence (\ref{B4}) of two
arbitrary coprime integers, $k_{1}$ and $k_{2}$, $k_{1}>$ $k_{2}$ and $r>2$. 
Namely, any other pair of coprime numbers leads to a different sequence. 
\begin{thm} \label{theor3}
For a pair of coprime numbers ($k_{1}$,$k_{2}$) the sequence of the $\left(
r-1\right) $-tuple $\left( n_{2},n_{3},...n_{r}\right) $ as given in (\ref%
{B4}) is unique for $r>2$. Two pairs of coprime numbers, ($k_{1}$,$k_{2})$
and ($k_{1}^{\prime }$,$k_{2}^{\prime }$), cannot lead to the same sequence
of integers $\left( n_{2},n_{3},n_{4},...,n_{r}\right)$, for $k_{3}^{\prime
}=k_{3}$ and $k_{4}^{\prime }=k_{4}$, (according to the sequence (\ref{B4})
) .
\end{thm}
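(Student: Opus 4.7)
The plan is to prove both directions of uniqueness. First, given a coprime pair $(k_1, k_2)$ with $k_1 > k_2$, I would observe that the sequence (\ref{B4}) is produced by the division algorithm: for each pair $(k_{i-1}, k_i)$ with $k_i > 0$, there exist \emph{unique} integers $q_i$ and $k_{i+1}$ satisfying $k_{i-1} = q_i k_i + k_{i+1}$ with $0 \le k_{i+1} < k_i$. This determines the $(r-1)$-tuple $(q_2, q_3, \ldots, q_r)$ inductively from $(k_1, k_2)$. By iterated application of Lemma \ref{lema1}, every pair $(k_i, k_{i+1})$ along the way remains coprime, so the process terminates with $k_{r+1} = 1$, giving $q_{r+1} = k_r$.

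For the converse direction—that two distinct coprime pairs $(k_1, k_2)$ and $(k_1', k_2')$ cannot share the same quotient sequence—I would invoke the matrix identity from Theorem \ref{teor3}, which gives
\[
\begin{pmatrix} k_1 \\ k_2 \end{pmatrix} = \left[\prod_{l=2}^{r} \begin{pmatrix} q_l & 1 \\ 1 & 0 \end{pmatrix}\right] \begin{pmatrix} q_{r+1} \\ 1 \end{pmatrix}.
\]
The right-hand side is a function solely of the tuple $(q_2, \ldots, q_r, q_{r+1})$ and is therefore identical for both pairs, forcing $(k_1, k_2) = (k_1', k_2')$. An alternative route is to invoke Theorem \ref{teor2}: matching quotient sequences give $k_1/k_2 = k_1'/k_2'$ as the value of the same continued fraction, and since both pairs are already coprime (hence in lowest terms), they must coincide.

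The condition $k_3' = k_3$ and $k_4' = k_4$ in the statement can be read as the first non-trivial case of a backward induction: starting from $k_{r+1} = 1$ and $k_r = q_{r+1}$, each earlier remainder is uniquely reconstructed from $k_{i-1} = q_i k_i + k_{i+1}$, so in particular $k_4$ and $k_3$ (and then $k_2$ and $k_1$) are pinned down. I would formalize this as a downward induction complementing the (upward) argument above.

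The main obstacle is not any deep technical difficulty—the result is essentially a repackaging of the bijection already established in Theorems \ref{teor2} and \ref{teor3}—but rather pinning down precisely what ``uniqueness'' means here, namely insisting that the remainders lie in $[0, k_i)$ so that the division step is genuinely a function, and being careful that the quotient $q_{r+1} = k_r$ that terminates the recursion is counted as part of the sequence (otherwise the continued fraction in Theorem \ref{teor2} would not reconstruct $k_1/k_2$ uniquely, since $//q_2,\ldots,q_r,1//$ and $//q_2,\ldots,q_r+1//$ represent the same rational).
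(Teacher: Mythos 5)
Your proposal is correct, and it takes a genuinely different route from the paper's. The paper argues by parametrizing the second coprime pair as a dilation-and-shift of the first, $k_{1}^{\prime}=m_{1}k_{1}+t_{1}$, $k_{2}^{\prime}=m_{2}k_{2}+t_{2}$ with $m_{i}\ge 1$ and $t_{i}\ge 0$, then imposing $k_{3}^{\prime}=k_{3}$, $n_{2}^{\prime}=n_{2}$, $k_{4}^{\prime}=k_{4}$, $n_{3}^{\prime}=n_{3}$ in turn and deriving that some quotient would have to be negative; this is case-based and, as written, only covers second pairs with $k_{i}^{\prime}\ge k_{i}$, a restriction the paper does not justify. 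You instead separate (i) well-definedness of the quotient tuple, which is exactly the uniqueness of quotient and remainder in the division algorithm together with Lemma \ref{lema1}, from (ii) injectivity, obtained by reconstructing $(k_{1},k_{2})$ from the quotient data via the matrix identity of Theorem \ref{teor3}, or equivalently from the continued-fraction value of Theorem \ref{teor2} plus the fact that a coprime pair is already a fraction in lowest terms. This is cleaner, fully general, and reuses machinery the paper has proved independently by direct substitution. Your observation that the terminal quotient $q_{r+1}=k_{r}$ (equivalently the hypotheses $k_{3}^{\prime}=k_{3}$, $k_{4}^{\prime}=k_{4}$ in the statement) must be part of the shared data is a point the paper glosses over and is essential: without it the claim fails, e.g.\ $(7,5)$ and $(10,7)$ both produce the quotient tuple $(1,2)$. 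Two minor cautions: invoke only the proven identities (the matrix product formula and $\left/\left/ q_{2},\dots,q_{r+1}\right/\right/ =k_{1}/k_{2}$), not the ``one-to-one correspondence'' wording of Theorem \ref{teor2}, since that wording is precisely what is being established here and citing it would be circular; and note that once $q_{2}=q_{2}^{\prime}$, $q_{3}=q_{3}^{\prime}$, $k_{3}=k_{3}^{\prime}$ and $k_{4}=k_{4}^{\prime}$ are all assumed, the two relations $k_{2}=q_{3}k_{3}+k_{4}$ and $k_{1}=q_{2}k_{2}+k_{3}$ already force $(k_{1},k_{2})=(k_{1}^{\prime},k_{2}^{\prime})$, which is your backward induction in its shortest form and is considerably more direct than the paper's sign analysis.
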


\begin{proof}
We construct another pair of coprime numbers ($k_{1}^{\prime }$,$%
k_{2}^{\prime }$) as $k_{1}^{\prime }=m_{1}k_{1}+t_{1}$ and $k_{2}^{\prime
}=m_{2}k_{2}+t_{2}$ (a dilation and a shift for $k_{1}$ and $k_{2}$) with $%
m_{1}$, $m_{2}$ chosen as positive integers and $t_{1}$, $t_{2}$ chosen as
non-negative integers, and $k_{1}^{\prime }>$ $k_{2}^{\prime }$. As $%
k_{1}=n_{2}k_{2}+k_{3}$ we can write
\begin{equation}
k_{1}^{\prime }=n_{2}^{\prime }k_{2}^{\prime }+k_{3}^{\prime }\quad \mathrm{%
\Longrightarrow \quad }m_{1}k_{1}+t_{1}=n_{2}^{\prime }\left(
m_{2}k_{2}+t_{2}\right) +k_{3}^{\prime },  \label{d1}
\end{equation}
or
\begin{equation}
m_{1}\left( n_{2}k_{2}+k_{3}\right) +t_{1}=n_{2}^{\prime }\left(
m_{2}k_{2}+t_{2}\right) +k_{3}^{\prime }.  \label{d2}
\end{equation}
Imposing $k_{3}=k_{3}^{\prime }$ we get the relation
\begin{equation}
n_{2}^{\prime }=\frac{m_{1}k_{2}}{m_{2}k_{2}+t_{2}}n_{2}+\frac{\left(
m_{1}-1\right) k_{3}+t_{1}}{m_{2}k_{2}+t_{2}}  \label{d3}
\end{equation}
and setting $n_{2}^{\prime }=n_{2}$ we obtain a specific relation on $%
m_{1}$, $m_{2}$, $t_{1}$ and $t_{2}$, 
\begin{equation}
n_{2}=\frac{\left( m_{1}-1\right) k_{3}+t_{1}}{\left( m_{2}-m_{1}\right)
k_{2}+t_{2}},  \label{d4}
\end{equation}
that can be a positive integer, due to the arbitrariness of $m_{1}$, $m_{2}$, %
$t_{1}$ and $t_{2}$. The second step of the demonstration consists in
considering the already imposed conditions $k_{3}=k_{3}^{\prime }$
and $n_{2}^{\prime }=n_{2}$, and use them in the second row in the sequence (%
\ref{B4}), 
\begin{eqnarray}
k_{2} =n_{3}k_{3}+k_{4},\; &\textrm{and \ }& k_{2}^{\prime }=n_{3}^{\prime
}k_{3}^{\prime }+k_{4}^{\prime }  \notag \\
&\Longrightarrow &m_{2}k_{2}+t_{2}=n_{3}^{\prime }k_{3}+k_{4}^{\prime
} \\ \notag
&\Longrightarrow & m_{2}\left( n_{3}k_{3}+k_{4}\right) +t_{2}=n_{3}^{\prime
}k_{3}+k_{4}^{\prime }.  
\label{d5}
\end{eqnarray}
For the additional condition $k_{4}=k_{4}^{\prime }$, such that 
\begin{equation}
m_{2}\left( n_{3}k_{3}+k_{4}\right) +t_{2}=n_{3}^{\prime }k_{3}+k_{4}
\end{equation}
we obtain
\begin{equation}
n_{3}^{\prime }=m_{2}n_{3}+\frac{\left( m_{2}-1\right) k_{4}+t_{2}}{k_{3}},
\label{d7}
\end{equation}
and initially considering $m_{2}=1$ we get the relation
\begin{equation}
n_{3}^{\prime }=n_{3}+\frac{t_{2}}{k_{3}}.  \label{d8}
\end{equation}
As so, we may have $n_{3}^{\prime }=n_{3}$ as long as $t_{2}=0$ and 
$m_{2}=1$. Introducing these conditions in Eq. (\ref{d4}), it becomes 
\begin{equation}
n_{2}=-\frac{\left( m_{1}-1\right) k_{3}+t_{1}}{k_{2}\left( m_{1}-1\right) }%
=-\left( \frac{k_{3}}{k_{2}}+\frac{t_{1}}{k_{2}\left( m_{1}-1\right) }%
\right) ,  \label{d9}
\end{equation}
which is a negative number! Thus $n_{3}^{\prime }=n_{3}$ is incompatible with $%
n_{2}^{\prime }=n_{2}$ positive. Now we consider the case $m_{2}>1$:
imposing $n_{3}^{\prime }=n_{3}$ in Eq. (\ref{d7}) we have
\begin{equation}
n_{3}=-\left( \frac{k_{4}}{k_{3}}+\frac{t_{2}}{k_{3}\left( m_{2}-1\right) }%
\right)  \label{d10}
\end{equation}
which is a negative number! Thus, although $n_{2}^{\prime }=n_{2}$ is
possible for $m_{2}>1$, $n_{3}^{\prime }=n_{3}$ is not, so, here too,
necessarily $n_{3}^{\prime }\neq n_{3}$. Summarizing, for two pairs of
coprime numbers ($k_{1}$,$k_{2}$) and ($k_{1}^{\prime }$,$k_{2}^{\prime }$)
that differ under a dilation and shift transformation, we can have both
conditions $k_{3}=k_{3}^{\prime }$ and $k_{4}=k_{4}^{\prime }$ but not $%
n_{2}^{\prime }=n_{2}$, (see Eq. (\ref{d4})), and $n_{3}^{\prime }=n_{3}$,
see (Eq. (\ref{d8})), fulfilled simultaneously. The equalities $%
k_{4}^{\prime }=k_{4}$ and $n_{3}^{\prime }=n_{3}$ happen iff $m_{2}=1$ and $%
t_{2}=0$, but $n_{2}=n_{2}^{\prime }$ is possible only for a negative
integer; when $m_{2}>1$ it is $n_{3}^{\prime }=n_{3}$ that becomes a
negative number. Thence the conditions $k_{3}=k_{3}^{\prime }$, $%
n_{2}^{\prime }=n_{2}$, $k_{4}^{\prime }=k_{4}$ and $n_{3}^{\prime }=n_{3}$
cannot be fulfilled simultaneously.
\end{proof}

\begin{expl} \label{ex10}
For instance, for $k_{1}=17$ and $k_{2}=11$, $n_{2}=1$ and $k_{3}=6$. Now we
choose $t_{1}=6$ and $t_{2}=4$, such that $k_{1}^{\prime
}=23,\,k_{2}^{\prime }=15$ with the condition $n_{2}^{\prime }=n_{2}=1$, we
will get 
\begin{equation}
k_{3}^{\prime }=t_{1}+\frac{\left( k_{2}+t_{2}\right) k_{3}-t_{2}k_{1}}{k_{2}%
}=8,  \label{b67}
\end{equation}
therefore $k_{3}^{\prime }\neq k_{3}$.
\end{expl}
\end{appendix}
%
%%%%%%%%%%%%%%%%%%%%%%%%%%%%%%%%%%%%%%%%%%%%%%%%%%%%%%%%%
\begin{acknowledgments}
\noindent SSM thanks the CNPq, a Federal Brazilian Agency, for financial support.
\newline
\noindent We also thank Dr. Gustavo Rigolin, from DF-UFSCar and Prof. Francisco 
C. Alcaraz, from IFSC-USP, for valuable discussions.
\end{acknowledgments}
%%%%%%%%%%%%%%%%%%%%%%%%%%%%%%%%%%%%%%%%%%%%%%%%%%%%%%%%%%%%%%%
%%%%%%%%%%%%%%%%%%%%%%%%%%%%%%%%%%%%%%%%%%%%%%%%%%%%%%%%%%%%%
%\section{Bibliografia}
%%%%%%%%%%%%%%%%%%%%%%%%%%%%%%%%%%%%%%%%%%%%%%%%%%%%%%%
%

\end{document}